\documentclass[a4paper,UKenglish]{lipics-v2016}

\graphicspath{{./figures/}}
\usepackage{amsthm}
\usepackage{amsmath,amssymb}
\usepackage{stmaryrd}
\usepackage{scrextend}
\usepackage{thmtools}
\usepackage{thm-restate}
\usepackage{todonotes}
\usepackage{mathtools}
\addtokomafont{labelinglabel}{\sffamily}

\newcommand{\citeauthor}[1]{\cite{#1}}
\newcommand{\citep}[1]{\cite{#1}}
\newcommand{\citet}[1]{\cite{#1}}
\usepackage{booktabs}
\usepackage{microtype}
\usepackage{xspace}

\newcommand{\CClique}{\textsc{$k$-Multicolored Clique}\xspace}

\newcommand{\Mosigmaone}{\textsc{MC($\Sigma_1$)}\xspace}
\newcommand{\Sigmaone}{$\Sigma_1$\xspace}

\newcommand{\Customized}{\textsc{MC($\forall^{\neq}$-FO)}\xspace}
\newcommand{\MCFO}{\textsc{MC(\FO)}\xspace}
\newcommand{\neqFO}{\textsc{$\forall^{\neq}$-FO}\xspace}

\newcommand{\gamename}{\textsc}
\newcommand{\classname}[1]{\textsf{\textup{\small #1}}\xspace}

\newcommand{\fpt}{\classname{FPT}}
\newcommand{\np}{\classname{NP}}

\newcommand{\pspace}{\classname{PSPACE}}
\newcommand{\wone}{\classname{W[1]}}
\newcommand{\cowone}{\classname{co-W[1]}}
\newcommand{\classw}{\classname{W}}
\newcommand{\awstar}{\classname{AW[*]}}

\newcommand{\SGG}{\textsc{Short Generalized Geography}\xspace}
\newcommand{\Hex}{\textsc{Hex}\xspace}
\newcommand{\GHex}{\textsc{Generalized Hex}\xspace}
\newcommand{\SHex}{\textsc{Short Hex}\xspace}
\newcommand{\SGHex}{\textsc{Short Generalized Hex}\xspace}
\newcommand{\SMM}{\textsc{Short Maker-Maker}\xspace}
\newcommand{\SMB}{\textsc{Short Maker-Breaker}\xspace}

\newcommand{\SEA}{\textsc{Short Enforcer-Avoider}\xspace}
\newcommand{\ConnectNoP}{\textsc{Connect}}
\newcommand{\Connect}{\textsc{Connect}($m$, $n$, $k$, $p$, $q$)\xspace}
\newcommand{\SConnect}{\textsc{Short $k$-Connect}\xspace}

\newcommand{\gpath}[1]{\mathit{path}(#1)}
\newcommand{\galigned}[1]{\mathit{hor\_vert}(#1)}
\newcommand{\gdaligned}[1]{\mathit{diag}(#1)}
\newcommand{\gAligned}[1]{\mathit{aligned}(#1)}

\newcommand{\delae}[2]{D_{#1}^{\exists}(\{#2\})}
\newcommand{\delaa}[2]{D_{#1}^{\forall}(\{#2\})}
\newcommand{\delai}[3]{D_{#2}^{#1}(#3)}
\newcommand{\vdelay}[2]{V^D_{#1}(\{#2\})}
\newcommand{\delaeb}[2]{D_{#1}^{\exists}(#2)}
\newcommand{\delaab}[2]{D_{#1}^{\forall}(#2)}
\newcommand{\vdelayb}[2]{V^D_{#1}(#2)}
\newcommand{\vexists}[1]{V^{\exists}(#1)}
\newcommand{\vforall}[1]{V^{\forall}(#1)}

\newcommand{\FO}{\textsc{FO}}
\newcommand{\card}[1]{\left| #1 \right|} %

\newcommand{\paramdecisionproblem}[4]{
\begin{center}
\begin{tabular}{|r@{\;}p{0.79\columnwidth}|}
\hline
\multicolumn{2}{|l|}{#1} \\
\textit{Instance:} & #2 \\
\textit{Parameter:} & #3 \\
\textit{Problem:} & #4\\
\hline
\end{tabular}
\end{center}
}

\usepackage{comment}%

\title{The Parameterized Complexity of Positional Games}%
\titlerunning{The Parameterized Complexity of Positional Games} %

\author[1]{Édouard Bonnet}
\author[2,3]{Serge Gaspers}
\author[4]{Antonin Lambilliotte}
\author[2,5]{Stefan R\"ummele}
\author[2]{Abdallah Saffidine}
\affil[1]{Middlesex University, London, UK\\
  \texttt{edouard.bonnet@dauphine.fr}}
\affil[2]{The University of New South Wales, Sydney, Australia\\
  \texttt{sergeg@cse.unsw.edu.au}, \texttt{s.rummele@unsw.edu.au}, \texttt{abdallah.saffidine@gmail.com}}
\affil[3]{Data61, CSIRO, Sydney, Australia}
\affil[4]{École Normale Supérieure de Lyon, Lyon, France\\
  \texttt{antonin.lambilliotte@ens-lyon.fr}}
\affil[5]{The University of Sydney, Sydney, Australia}
\authorrunning{É. Bonnet, S. Gaspers, A. Lambilliotte, S. Rümmele, and A. Saffidine}

\Copyright{Édouard Bonnet, Serge Gaspers, Antonin Lambilliotte, Stefan R\"ummele, and Abdallah Saffidine}

\subjclass{F.2.2 Nonnumerical Algorithms and Problems}%
\keywords{Hex, Maker-Maker games, Maker-Breaker games, Enforcer-Avoider games, parameterized complexity theory}

\begin{document}

\maketitle

\begin{abstract}
  We study the parameterized complexity of several positional games.
  Our main result is that \SGHex is \wone-complete parameterized by the number of moves.
  This solves an open problem from Downey and Fellows' influential list of open problems from 1999.
  Previously, the problem was thought of as a natural candidate for \awstar-completeness.

  Our main tool is a new fragment of first-order logic where universally quantified variables only occur in inequalities.
  We show that model-checking on arbitrary relational structures for a formula in this fragment is \wone-complete when parameterized by formula size.

  We also consider a general framework where a positional game is represented as a hypergraph and two players alternately pick vertices.
  In a Maker-Maker game, the first player to have picked all the vertices of some hyperedge wins the game.
  In a Maker-Breaker game, the first player wins if she picks all the vertices of some hyperedge, and the second player wins otherwise.
  In an Enforcer-Avoider game, the first player wins if the second player picks all the vertices of some hyperedge, and the second player wins otherwise.

  \SMM is \awstar-complete, whereas \SMB is \wone-complete and \SEA \cowone-complete parameterized by the number of moves.
  This suggests a rough parameterized complexity categorization into positional games that are complete for the first level of the \classw-hierarchy when the winning configurations only depend on which vertices one player has been able to pick, but \awstar-completeness when the winning condition depends on which vertices both players have picked.
  However, some positional games where the board and the winning configurations are highly structured are fixed-parameter tractable.
  We give another example of such a game, \SConnect, which is fixed-parameter tractable when parameterized by the number of moves.
\end{abstract}

\section{Introduction}

In a \emph{positional game} \cite{HalesJ63}, two players alternately claim unoccupied elements of the board of the game.
The goal of a player is to claim a set of elements that form a winning set, and/or to prevent the other player from doing so.

\gamename{Tic-Tac-Toe}, its competitive variant played on a $15\times 15$ board, \gamename{Gomoku}, as well as \gamename{Hex} are the most well-known positional games.
When the size of the board is not fixed, the decision problem, whether the first player has a winning strategy from a given position in the game is \pspace-complete for many such games.
The first result was established for \GHex, a variant played on an arbitrary graph~\cite{EvenT1976}.
\citeauthor{Reisch1980} soon followed up with results for \gamename{gomoku}~\cite{Reisch1980} and \gamename{Hex} played on a board \cite{Reisch81}.
More recently, \pspace-completeness was obtained for \gamename{Havannah}~\cite{BonnetJS2016TCS} and several variants of \gamename{Connect}($m$, $n$, $k$, $p$, $q$)~\cite{HsiehT2007}, a framework that encompasses \gamename{Tic-Tac-Toe} and \gamename{Gomoku}.

In a Maker-Maker game, also known as strong positional game, the winner is the first player to claim all the elements of some winning set.
In a Maker-Breaker game, also known as weak positional game, the first player, Maker, wins by claiming all the elements of a winning set, and the second player, Breaker, wins by preventing Maker from doing so.
In an Enforcer-Avoider game, the first player, Enforcer, wins if the second player claims all the vertices of a winning set, and the second player, Avoider, wins otherwise.

In this paper, we consider the corresponding short games, of deciding whether the first player has a winning strategy in $\ell$ moves from a given position in the game, and parameterize them by $\ell$.
The parameterized complexity of short games is known for games such as generalized chess \cite{ScottS08}, generalized geography \cite{AbrahamsonDF93,AbrahamsonDF95}, and pursuit-evasion games \cite{ScottS10}.
For \Hex, played on a hexagonal grid, the short game is \fpt and for \GHex, played on an arbitrary graph, the short game is \wone-hard and in \awstar.

When winning sets are given as arbitrary hyperedges in a hypergraph, we refer to the three game variants as \gamename{Maker-Maker}, \gamename{Maker-Breaker}, and \gamename{Enforcer-Avoider}, respectively.
\gamename{Maker-Breaker} was first shown \pspace-complete by \citet{Schaefer1978} under the name $G_{\text{pos}}(\text{POS DNF})$.
A simpler proof was later given by \citet{Byskov2004} who also showed \pspace-completeness of \gamename{Maker-Maker}.
To the best of our knowledge, the classical complexity of \gamename{Enforcer-Avoider} has not been established yet.

In this paper we will show that the short game for \GHex is \wone-complete, solving an open problem stated numerous times \cite{BonnetJS2016TCS,DowneyF2013,DowneyF99,FominM12,Scott2009}, we establish that the short game for a generalization of Tic-Tac-Toe is \fpt, and we determine the parameterized complexity of the short games for \gamename{Maker-Maker}, \gamename{Maker-Breaker}, and \gamename{Enforcer-Avoider}.
One of our main tools is a new fragment of first-order logic where universally-quantified variables only occur in inequalities and no other relations.
After giving some necessary definitions in the next section, we will state our results precisely, and discuss them. The rest of the paper is devoted to the proofs of our results, with some parts deferred to the appendix, due to space constraints.

\section{Preliminaries}
\subparagraph*{Finite structures.}
A vocabulary $\tau$ is a finite set of relation symbols, each having an associated arity.
A finite structure $\mathcal{A}$ over $\tau$ consists of a finite set $A$, called the universe, and for each $R$ in $\tau$ a relation over $A$ of corresponding arity.
An (undirected) graph is a finite structure $G=(V,E)$, where $E$ is a symmetric binary relation.
A hypergraph is a finite structure $G=(V\cup E,\mathit{IN})$, where $\mathit{IN} \subseteq V \times E$ is the incidence relation between vertices and edges.
Sometimes it is more convenient to denote a hypergraph instead by a tuple $G=(V,E)$ where $E$ is a set of subsets of $V$.

\subparagraph*{First-order logic.}
We assume a countably infinite set of variables.
Atomic formulas over vocabulary $\tau$ are of the form $x_1=x_2$ or $R(x_1,\dots,x_k)$ where $R\in \tau$ and $x_1,\dots,x_k$ are variables.
The class \FO\ of all first-order formulas over $\tau$ consists of formulas that are constructed from atomic formulas over $\tau$ using standard Boolean connectives $\neg, \wedge, \vee$ as well as quantifiers $\exists,\forall$ followed by a variable.
Let $\varphi$ be a first-order formula.
The size of (a reasonable encoding of) $\varphi$ is denoted by $\card{\varphi}$.
The variables of $\varphi$ that are not in the scope of a quantifier are called free variables.
We denote by $\varphi(\mathcal{A})$ the set of all assignments of elements of $A$ to the free variables of $\varphi$ such that $\varphi$ is satisfied.
We call $\mathcal{A}$ a model of $\varphi$ if $\varphi(\mathcal{A})$ is not empty.
The class $\Sigma_1$ contains all first-order formulas of the form $\exists x_1, \dots, \exists x_k \varphi$ where $\varphi$ is a quantifier free first-order formula.

\subparagraph*{Parameterized complexity.}
The class \fpt contains all parameterized problems that can be decided by an \fpt-algorithm.
An \fpt-algorithm is an algorithm with running time $f(k)\cdot n^{\mathcal{O}(1)}$, where $f(\cdot)$ is an arbitrary computable function that only depends on the parameter $k$ and $n$ is the size of the problem instance.
An \textit{\fpt-reduction} of a parameterized problem $\Pi$ to a parameterized problem $\Pi'$ is an \fpt-algorithm that transforms an instance $(I, k)$ of $\Pi$ to an instance $(I', k')$ of $\Pi'$ such that: (i) $(I, k)$ is a yes-instance of $\Pi$ if and only if $(I', k')$ is a yes-instance of $\Pi'$, and (ii) $k' = g(k)$, where $g(\cdot)$ is an arbitrary computable function that only depends on $k$.
Hardness and completeness with respect to parameterized complexity classes is defined analogously to the concepts from classical complexity theory, using \fpt-reductions.
The following parameterized classes will be needed in this paper:
$ \fpt \subseteq \wone \subseteq \awstar$.
Many parameterized complexity classes can be defined via a version of the following model checking problem.
\paramdecisionproblem{
  \textsc{MC($\Phi$)}}{
  Finite structure $\mathcal{A}$ and formula $\varphi \in \Phi$.}{
  $\card{\varphi}$.}{
  Decide whether $\varphi(\mathcal{A}) \neq \emptyset$.}
In particular, the problem \textsc{MC($\Sigma_1$)} is \wone-complete and the problem \textsc{MC(\FO)} is \awstar-complete (see for example~\cite{FlumG1998}).

\subparagraph*{Positional games.}
Positional games are played by two players on a hypergraph $G=(V,E)$.
The vertex set $V$ indicates the set of available positions, while the each hyperedge $e\in E$ denotes a winning configuration.
For some games, the hyperedges are implicitly defined, instead of being explicitly part of the input.
The two players alternatively claim unclaimed vertices of $V$ until either all elements are claimed or one player wins.
A \emph{position} in a positional game is an allocation of vertices to the players, who have already claimed these vertices.
The \emph{empty position} is the position where no vertex is allocated to a player.
The notion of winning depends on the game type.
In a \emph{Maker-Maker game}, the first player to claim all vertices of some hyperedge $e \in E$ wins.
In a \emph{Maker-Breaker game}, the first player (\emph{Maker}) wins if she claims all vertices of some hyperedge $e\in E$.
If the game ends and player 1 has not won, then the second player (\emph{Breaker}) wins.
In an \emph{Enforcer-Avoider game}, the first player (\emph{Enforcer}) wins if the second player (\emph{Avoider}) claims all vertices of some hyperedge $e\in E$.
If the game ends and player 1 has not won, then the second player wins.
A positional game is called an $l$-move game, if the game ends either after a player wins or both players played $l$ moves.
A winning strategy for player 1 is a move for player 1 such that for all moves of player 2 there exists a move of player 1\dots such that player 1 wins.

\section{Results}

The first game we consider is a Maker-Maker game that generalizes well-known games \textsc{Tic-Tac-Toe}, \textsc{Connect6}, and \textsc{Gomoku} (also known as \textsc{Five in a Row}).
In \Connect, the vertices are cells of an $m \times n$ grid, each set of $k$ aligned cells (horizontally, vertically, or diagonally) is a winning set, the first move by player 1 is to claim $q$ vertices, and then the players alternate claim $p$ unclaimed vertices at each turn. \textsc{Tic-Tac-Toe} corresponds to \ConnectNoP($3,3,3,1,1$), \textsc{Connect6} to \ConnectNoP($19,19,6,2,1$), and \textsc{Gomoku} to \ConnectNoP($19,19,5,1,1$). Variations with different board sizes are also common.
In the \SConnect problem, the input is the set of $m\cdot n$ vertices, an assignment of some of these vertices to the two players, the integer $p$, and the parameter $\ell$.
The winning sets corresponding to the $k$ aligned cells are implicitly defined.
The question is whether player 1 has a winning strategy from this position in at most $\ell$ moves.
We omit $q$ from the problem definition of \SConnect since we are modeling games that advanced already past the initial moves.
Our first result (proved in Section \ref{sec:connect}) is that \SConnect is fixed-parameter tractable for parameter $\ell$. (In all our results, the parameter is the number of moves, $\ell$.)

\begin{restatable}{theorem}{thmconnect}
  \label{thm:connect}
  \SConnect is \fpt.
\end{restatable}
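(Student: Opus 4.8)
The plan is to decide the game by exploring a game tree of depth at most $2\ell$ — player~1 and player~2 each make at most $\ell$ moves — so everything reduces to bounding, by a function of $\ell$ alone, the number of \emph{strategically distinct} moves available at each node. First I would discard irrelevant cells. Since a player claims at most $\ell p$ cells over the whole game, a winning line can be completed by player~$i$ within the game only if it currently contains no cell of the other player and at most $\ell p$ cells not owned by player~$i$; call such a line \emph{live for $i$}, and observe that claiming a cell lying on no live line is never useful. I would also dispose of the trivial case first: if some line is live for player~1 with at most $p$ holes (unclaimed cells), she completes it on her first move and wins; so from now on every line live for player~1 has at least $p+1$ holes (and on each of her moves player~1 must in addition destroy every line on which player~2 is one move from winning, which the search below incorporates).

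The real difficulty is that the number of relevant cells and live lines is in general $\Theta(\mathrm{poly}(\ell)\cdot p)$, hence \emph{not} bounded in the parameter, so naive enumeration of moves does not give an \fpt algorithm. For instance, by claiming $p$ widely separated cells, each being the common point of four length-$k$ lines (one per direction) that pairwise meet only at that cell, player~1 can create up to $4p$ pairwise disjoint one-move threats, which player~2 cannot all block with her $p$ tokens; so player~1 may genuinely need to engage $\Theta(p)$ lines at once. The key structural claim — and the main obstacle — is that, because the board is a grid, these $\Theta(p)$-many candidate moves collapse into only $f(\ell)$ equivalence classes: there are only four directions, every winning line is an interval of $k$ consecutive cells, and the board is translation-invariant. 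Concretely I would show that the strategically relevant effect of a move is captured by a bounded-size \emph{type} — which finite family of pairwise-interacting live lines it touches, how far it advances each, and with what multiplicity it realizes pairwise disjoint translated copies of such a local configuration — and that player~1 wins within the remaining move budget using an arbitrary move iff she wins using the ``canonical'' move of the same type. The number of types is bounded by a function of $\ell$, since every live line has at most $\ell p$ relevant holes and, by the four-directions property, only boundedly many distinct interaction patterns between lines can arise in a play of at most $\ell$ moves; within a fixed type, the concrete disjoint copies can be picked greedily by solving an easy packing problem on the grid in polynomial time.

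Assembling the pieces, the algorithm recursively explores, at each of the $\le 2\ell$ plies, the $f(\ell)$ move-types available to the player to move, resolving multiplicities in polynomial time, for a total running time of $f(\ell)\cdot(mnkp)^{O(1)}$. The step I expect to be hardest is justifying the type-equivalence under the Maker-Maker winning condition. Unlike Maker-Breaker, here both players have simultaneous offensive and defensive obligations, so the ``uncrossing'' argument that lets overlapping realized threats be replaced by a maximum packing of disjoint translated copies — the reason only a move's type, not its exact cell set, matters — must be carried out while preserving both players' abilities to build and to block, and with $p$ and $k$ unbounded throughout. I would prove it by an exchange argument: starting from any $\le\ell$-move winning strategy for player~1, repeatedly overwrite a move by the canonical move of its type (spreading overlapping patterns into a disjoint packing and padding the token quota with irrelevant cells) and show by induction on the number of remaining moves that player~1 still wins, using that disjoint threats are precisely those that force player~2 to spend one token apiece and that translations within the grid preserve all incidences among winning lines.
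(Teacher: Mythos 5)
Your route is genuinely different from the paper's: the paper does no game-tree search at all, but writes the $\ell$-move game as a single first-order formula of size polynomial in $\ell$ (using that $k$ is a fixed constant of the problem and that $p<k$, since $p\geq k$ is trivial) over the grid-with-diagonals graph, which has maximum degree $8$, and then invokes the Seese/Frick--Grohe meta-theorem for \textsc{MC(\FO)} on bounded-degree (locally tree-decomposable) structures. All the combinatorial "only boundedly many local configurations matter" content that you are trying to establish by hand is delivered for free by the locality behind that meta-theorem.

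As written, your argument has a genuine gap at exactly the step you flag as hardest. The claim that the $\Theta(\mathrm{poly}(\ell)\cdot p)$ candidate moves collapse into $f(\ell)$ strategically equivalent \emph{types}, together with the exchange lemma ("player~1 wins using an arbitrary move iff she wins using the canonical move of the same type"), is the entire theorem, and the proposal only gestures at it. Two concrete problems. First, you insist on keeping $p$ and $k$ unbounded, but then even a single live line does not have a bounded type: its hole pattern is a subset of its $k$ cells of size up to $\ell p$, and the \emph{positions} of the holes (not just their number) govern which crossing lines a given cell can simultaneously advance or block, so "boundedly many interaction patterns" is not justified -- note that the problem as defined in the paper avoids this by fixing $k$ (hence $p<k$) as a constant. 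Second, the exchange argument is set up as a substitution on player~1's moves only, but to push the induction through you need an equivalence on \emph{positions} that is preserved under both players' play: after you replace a move by its canonical representative, player~2's responses in the two lines of play differ, and you must exhibit, for every reply to the canonical move, a reply to the original move landing in an equivalent position (a bisimulation), while respecting Maker--Maker's dual obligation that each of player~1's cells may need to both extend her own line and kill a one-move threat of player~2 -- whether one cell can do both depends on exact geometry, not on the type data you record. Until that invariant is defined and proved, the algorithm's correctness does not follow.
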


The main reason for this tractability is the rather special structure of the winning sets. It helps reducing the problem to model checking for first-order logic on locally bounded treewidth structures, which is \fpt \cite{FrickG2001}.

A similar strategy was recently used to show that \SHex is \fpt~\citep{BonnetJS2016TCS}.
The \Hex game is played on a parallelogram board paved by hexagons, each player owns two opposite sides of the parallelogram. Players alternately claim an unclaimed cell, and the first player to connect their sides with a path of connected hexagons wins the game.
Note that we may view \Hex as a Maker-Breaker game: if the second player manages to disconnect the first players sides, he has created a path connecting his sides.
\citet{BonnetJS2016TCS} also considered a well-known generalization to arbitrary graphs.
The $\GHex$ game is played on a graph with two specified vertices $s$ and $t$. The two players alternately claim an unclaimed vertex of the graph, and player 1 wins if she can connect $s$ and $t$ by vertices claimed by her, and player 2 wins if he can prevent player 1 from doing so.
The \SGHex problem has as input a graph $G$, two vertices $s$ and $t$ in $G$, an allocation of some of the vertices to the players, and an integer $\ell$. The parameter is $\ell$, and the question is whether player 1 has a winning strategy to connect $s$ and $t$ in $\ell$ moves.

The \SGHex problem is known to be in \awstar and was conjectured to be \awstar-complete \cite{BonnetJS2016TCS,DowneyF2013,DowneyF99,FominM12,Scott2009}.
In fact, \awstar is thought of as the natural home for most short games \cite{DowneyF2013}, playing a similar role in parameterized complexity as PSPACE in classical complexity for games with polynomial length.
However, \citet{BonnetJS2016TCS} only managed to show that \SGHex is \wone-hard, leaving a complexity gap between \wone and \awstar.
Our next result is to show that \SGHex is in \wone.
Thus, \SGHex is in fact \wone-complete.
\begin{restatable}{theorem}{thmhex}
  \label{thm:hex}
  \SGHex is \wone-complete.
\end{restatable}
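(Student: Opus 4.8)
Since \SGHex is already known to be \wone-hard~\cite{BonnetJS2016TCS}, it suffices to prove membership in \wone, and the plan is to do this by an \fpt-reduction to \Customized, which is in \wone. Write player 1 for the connecting player and player 2 for the blocking player. First I would dispose of trivial cases and preprocess: if $s$ or $t$ is claimed by player 2 the instance is a no-instance; otherwise, after a short preprocessing step, assume that $s$ and $t$ are claimed by player 1, delete from $G$ every vertex claimed by player 2, and contract each connected component of the subgraph induced by the vertices claimed by player 1 together with $s$ and $t$ (if this merges $s$ and $t$ the instance is trivially a yes-instance). The structure $\mathcal{A}$ for the reduction is the contracted graph with its adjacency relation $E$; deciding \SGHex then amounts to deciding whether player 1 can, in at most $\ell$ moves and against every play of player 2, claim $\le\ell$ new vertices that together with the super-vertices $s$ and $t$ form an $s$--$t$ path.

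The heart of the reduction is a normal form for winning strategies of player 1: she has a winning strategy in $\le\ell$ moves if and only if she has one of the following restricted shape --- she commits to a target $s$--$t$ path, claims its still-missing vertices one at a time, and switches to a new target path only in reaction to a move of player 2 that hits the still-missing part of the current target path, ignoring every other (``off-path'') move. Such a strategy is a tree of depth $\le 2\ell$ in which each player-2 node has at most $\ell+1$ children: one per still-missing vertex of the current target path, plus one child covering all off-path moves at once; hence the tree has size bounded by a function of $\ell$. I would translate the existence of such a tree into a formula $\varphi$ by recursion over the tree: the vertices claimed by player 1 (the initial target path and the target path chosen at each switch) are quantified existentially; the generic off-path move at each player-2 node is quantified universally by a fresh variable $y$, whose branch is guarded by ``$y$ differs from every still-missing vertex of the current target path'' and under which every later target path is required to avoid $y$ (and the earlier off-path variables) --- all using only (in)equalities with $y$. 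Thanks to the contraction step, ``the vertices claimed by player 1 on this branch contain an $s$--$t$ path'' is a bounded-size positive formula over $E$. Thus every universally quantified variable of $\varphi$ occurs only in (in)equalities, $\varphi$ is a \neqFO formula, $|\varphi|$ depends only on $\ell$, and $\mathcal{A}\models\varphi$ iff player 1 wins in $\le\ell$ moves. With the known hardness, this yields \wone-completeness. It also shows why the naive approach fails: reading off the game tree directly gives an \MCFO formula with $\ell$ quantifier alternations (so only membership in \awstar), and the gain is precisely that the universal (player-2) quantifiers can be confined to inequalities.

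The main obstacle is the normal-form lemma, and in particular two points in its proof. First, one must show that ``proactive'' switches of the target path can be postponed, so that player 1 reacts only to on-path hits. Second, and more delicate, when player 1 later switches she must avoid all earlier moves of player 2, including the off-path ones, even though neither the strategy nor the formula records which vertices those were; the resolution is that in $\varphi$ the new target path is chosen after the universal variable $y$, hence may depend on it, and player 1 only ever needs to avoid that single generic vertex (plus the finitely many already-named earlier off-path variables along the branch). This ordering is exactly what lets the off-path move be expressed by a universal variable occurring only in inequalities, and therefore what makes \neqFO --- rather than \Mosigmaone or the full \MCFO --- the right target of the reduction; verifying that it is sound and that the bounded tree faithfully captures all options of player 2 is the technical core.
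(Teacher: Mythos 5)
Your overall plan (reduce to \Customized, with hardness quoted from \cite{BonnetJS2016TCS}) matches the paper, but the reduction you describe has a genuine gap, and it sits exactly where you locate the technical core. At each player-2 node you split into the cases ``$y$ hits the $j$-th still-missing vertex of the current target path'' and ``$y$ is off-path'', and you guard the off-path child by ``$y$ differs from every still-missing vertex''. In negation normal form that guard becomes $\bigvee_j (y = x_j) \vee \chi_{\mathrm{off}}$, so the universally quantified $y$ occurs in \emph{positive} equality literals. The fragment \neqFO only allows universal variables in literals of the form $y \neq x_j$; this is not cosmetic, since the membership proof of Theorem~\ref{thm:customized} relies on monotonicity in exactly these literals (flipping an inequality on a universal variable from false to true must never falsify the formula), and that argument breaks in the presence of positive equalities. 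Your phrase ``all using only (in)equalities with $y$'' glosses over precisely this point. One could try to repair it by dropping the guards and taking an unguarded disjunction of self-validating continuations, but then the correctness argument changes, and you would still owe a full proof of the normal-form lemma (that proactive path switches can be postponed and that the bounded tree captures all of player 2's options), which you leave as a sketch.

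More importantly, the entire apparatus is unnecessary, and your closing claim that ``the naive approach fails'' is the opposite of what happens: the naive approach is the paper's proof. The formula is the literal transcription of the game tree, $\exists x_1 \forall y_1 \exists x_2 \cdots \forall y_{l-1} \exists x_l \exists z_1 \cdots \exists z_l\, \psi$, where $\psi$ asserts that the $x$'s are pairwise distinct, that $x_k \neq y_j$ for $j < k$, and that some tuple $z_1,\dots,z_j$ drawn from the $x$'s forms an $s$--$t$ path. Because player 1's winning condition depends only on her own vertices and the sole effect of player 2's moves is to block, the universal variables already occur only in the inequalities $y_j \neq x_k$: the direct game-tree formula is \emph{already} in \neqFO, with no strategy normal form, no bounded tree, and no case split on where $y$ lands. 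Adaptivity is handled for free by the quantifier order, since each $x_k$ and the final path witnesses are chosen after the earlier $y_j$'s. The reason this does not merely yield \awstar-membership is Theorem~\ref{thm:customized} itself, not any restructuring of player 1's strategies.
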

Our main tool is a new fragment of first-order logic for which model-checking on arbitrary relational structures is \wone-complete parameterized by the length of the formula.
This fragment, which we call \neqFO, is the fragment of first-order logic where universally-quantified variables appear only in inequalities.
\begin{restatable}{theorem}{thmcustomized}
  \label{thm:customized}
  \Customized is \wone-complete.
\end{restatable}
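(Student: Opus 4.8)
The plan is to prove \wone-hardness by a trivial reduction and \wone-membership by an \fpt-reduction to \Mosigmaone. For hardness, observe that a $\Sigma_1$-formula $\exists x_1 \cdots \exists x_k\,\varphi$ contains no universal quantifier, hence vacuously belongs to \neqFO; thus $\Sigma_1 \subseteq \neqFO$, and the identity map is an \fpt-reduction from \Mosigmaone, which is \wone-hard, to \Customized.

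For membership, given an input $(\mathcal{A},\varphi)$, I would first normalize $\varphi$ into a sentence (existentially closing its free variables), put it in negation normal form, and rename bound variables apart, so that in every subformula $\forall y\,\psi$ the variable $y$ occurs in $\psi$ only in inequality literals $y \neq u$. Fix a computable function $\tau$ to be determined. If $\card{A} \le \tau(\card{\varphi})$, decide $\mathcal{A} \models \varphi$ outright by the naive bottom-up evaluation, which runs in time $\card{A}^{\card{\varphi}} \cdot \card{\varphi}^{O(1)} \le \tau(\card{\varphi})^{\card{\varphi}} \cdot \card{\varphi}^{O(1)}$ --- an \fpt\ computation --- and output a fixed yes- or no-instance of \Mosigmaone. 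Otherwise $\card{A} > \tau(\card{\varphi})$, and the goal is to rewrite $\varphi$ into an equivalent formula free of universal quantifiers --- hence, after prenexing, a $\Sigma_1$-sentence --- of size bounded by a computable function of $\card{\varphi}$, which completes the reduction.

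The rewriting would remove universal quantifiers one at a time, each step taking a $\forall y$ whose scope $\psi$ contains no universal quantifier. The guiding intuition, from the model-checking game, is that when the falsifier picks a value for the weak variable $y$, all that matters downstream is which of the already-committed values $y$ is made equal to, so the falsifier has only a bounded number of genuinely different options; replacing $\forall y$ by the conjunction over those options turns it into an essentially existential formula, at the cost of a blow-up that is a function of $\card{\varphi}$. To make this precise one normalizes $\psi$ to a disjunction $\bigvee_j \exists \bar{u}_j\,\delta_j$ of prenex-existential conjunctions of literals, writes each $\delta_j$ as $\epsilon_j \wedge \bigwedge_w (y \neq w)$ with $\epsilon_j$ free of $y$, and shows that --- with all other variables fixed --- the truth value of $\psi$ as a function of $y$ is constant outside a set $\mathrm{SPECIAL}$ whose size is bounded by a function of $\card{\varphi}$: the values of the variables free in $\forall y\,\psi$ together with, for each $j$, the values taken by $\bar{u}_j$ in one satisfying assignment of $\epsilon_j$. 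Choosing $\tau$ above this bound guarantees an element of $A$ outside $\mathrm{SPECIAL}$, so $\forall y\,\psi$ is equivalent to the conjunction of $\psi$ at such a ``generic'' element --- which collapses to $\bigvee_j \exists \bar{u}_j\,\epsilon_j$, free of $y$ --- and $\psi$ at each element of $\mathrm{SPECIAL}$, expressed as a bounded conjunction of \neqFO formulas obtained by substituting for $y$, into $\psi$, the free variables of $\forall y\,\psi$ and the components of appropriately quantified witnesses of the $\epsilon_j$. Iterating until no universal quantifier remains and then prenexing gives the desired $\Sigma_1$-sentence, with $\tau$ chosen large enough to dominate every $\mathrm{SPECIAL}$-bound that arises.

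The hard part will be precisely this elimination step, because a weak universal variable $y$ may be compared with existential variables that occur \emph{inside} its own scope --- exactly the situation arising in the reduction underlying Theorem~\ref{thm:hex} --- so $\forall y$ cannot be removed by a naive case split over the variables currently in scope: the ``dangerous'' values of $y$ are not denoted by any variable, but live among the components of existential witnesses. Setting up the normalization of $\psi$ and accounting for these hidden values so that the replacement (i) stays in \neqFO, (ii) remains equivalent, and (iii) grows the formula by at most a function of $\card{\varphi}$ is the crux of the argument; the rest (negation normal form, capture-avoiding substitution, bounding the cumulative blow-up over the iterations) is routine bookkeeping.
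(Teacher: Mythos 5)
Your proposal is correct and follows essentially the same route as the paper: both eliminate the innermost universal quantifier by a bounded case split whose ``special'' values are exactly the outer variables plus the components of the existential witnesses inside the scope, together with one generic representative, iterating from the rightmost universal quantifier outward. The only real difference is that you guarantee the existence of a generic element via a universe-size case split with brute force (and a DNF normalization of the matrix), whereas the paper avoids both by putting $\psi$ in negation normal form and observing that, since the universal variable occurs only in negated equalities, $\psi$ is monotone in those literals --- so the conjunct asserting $\exists y\,\psi$ in the rewritten formula already certifies $\psi$ at every non-special value, with no lower bound on $\card{A}$ needed.
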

This result is proved by reducing a formula in \neqFO to a formula in \Sigmaone.
The \neqFO logic makes it convenient to express short games where we can express that player 1 can reach a certain configuration without being blocked by player 2, no matter what configurations player 2 reaches.
This is indeed the case for \GHex, where we are merely interested in knowing if player 1 can connect $s$ and $t$ without being blocked by player 2.

More generally, this is the case for \SMB, where the input is a hypergraph $G=(V,E)$, a position, and an integer $\ell$, and the question is whether player 1 has a winning strategy to claim all the vertices of some hyperedge in $\ell$ moves.
\begin{restatable}{theorem}{thmmb}
  \label{thm:mb}
  \SMB is \wone-complete.
\end{restatable}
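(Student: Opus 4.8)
The plan is to prove membership in \wone by an \fpt-reduction to \Customized (which is \wone-complete by Theorem~\ref{thm:customized}), and \wone-hardness by an \fpt-reduction from \CClique. Throughout, player~1 is Maker and player~2 is Breaker, and the feature we exploit for the upper bound is that a Maker--Breaker game is won the moment Maker owns all vertices of some hyperedge -- a condition that refers only to Maker's vertices, never to Breaker's.

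For the reduction to \Customized, first preprocess: if some hyperedge is already contained in Maker's part of the position, output a trivial yes-instance; otherwise discard every hyperedge meeting Breaker's part of the position or having more than $\ell$ unclaimed vertices, and for each surviving hyperedge $e$ record the tuple $\vec u^{\,e}\in V^\ell$ obtained by listing the unclaimed vertices of $e$ (there are between $1$ and $\ell$ of them) and padding to length $\ell$ by repeating the last one. If fewer than $2\ell$ vertices are unclaimed, solve the instance by brute force over its game tree, which has at most $(2\ell)!$ leaves. Otherwise build the structure $\mathcal A$ with universe the set of unclaimed vertices and a single $\ell$-ary relation $\mathit{Win}:=\{\vec u^{\,e}: e \text{ surviving}\}$, and the formula
\[
\varphi:=\psi_1,\qquad \psi_\ell:=\exists m_\ell\,(\gamma_\ell\wedge\chi_\ell),\qquad \psi_i:=\exists m_i\,\bigl(\gamma_i\wedge(\chi_i\vee\forall b_i\,(\delta_i\rightarrow\psi_{i+1}))\bigr)\ \text{ for } 1\le i<\ell,
\]
where $\gamma_i:=\bigwedge_{j<i}(m_i\neq m_j)\wedge\bigwedge_{j<i}(m_i\neq b_j)$ forces Maker's $i$-th move to be legal, $\delta_i:=\bigwedge_{k\le i}(b_i\neq m_k)\wedge\bigwedge_{k<i}(b_i\neq b_k)$ does the same for Breaker's $i$-th move, and $\chi_i:=\exists w_1\cdots\exists w_\ell\,(\mathit{Win}(w_1,\dots,w_\ell)\wedge\bigwedge_{s=1}^{\ell}\bigvee_{k=1}^{i} w_s=m_k)$ expresses that Maker's first $i$ moves already cover some winning configuration. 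Since at least $2\ell$ vertices are unclaimed, no player is ever stuck during the first $\ell$ rounds, so unfolding the game tree shows $\mathcal A\models\varphi$ iff Maker wins in at most $\ell$ moves. Crucially, the universally quantified variables $b_1,\dots,b_{\ell-1}$ occur only inside the inequalities of the $\gamma_i$ and $\delta_i$ -- the relation $\mathit{Win}$ and the winning condition $\chi_i$ mention only the $m_k$ and the fresh $w_s$ -- so $\varphi\in\neqFO$; as $\lvert\varphi\rvert=\mathcal O(\ell^3)$, this is an \fpt-reduction and \SMB $\in$ \wone.

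For \wone-hardness we reduce from \CClique; this is the more involved direction. At a high level, from a $k$-coloured graph $H$ we build a hypergraph together with a position so that player~1 can complete a hyperedge within $\ell=\mathcal O(k^2)$ moves precisely when $H$ contains a multicoloured clique: player~1's moves are forced (by her own objective) to trace out a candidate clique and to verify each of its $\binom k2$ required edges, and the gadgets are arranged so that player~2 can derail an invalid trace but not a valid one. The delicate point -- and the one I expect to be the real work -- is designing these gadgets with only polynomially many hyperedges (so that, unlike the implicitly given winning $s$--$t$ paths of \SGHex, they can be listed explicitly in the \SMB instance) while still calibrating the move budget so that player~2's interference is exactly as strong as it needs to be: powerless against a genuine clique, decisive against a fake one. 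The \neqFO-based upper bound, by contrast, is essentially mechanical once one observes that Breaker's moves enter the description only through inequalities.
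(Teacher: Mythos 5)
Your membership argument follows the paper's route (preprocess the claimed vertices away, then reduce to \Customized), but the formula you exhibit is \emph{not} in \neqFO, and this is exactly where the fragment bites. You encode Breaker's legality as an antecedent, $\forall b_i\,(\delta_i\rightarrow\psi_{i+1})$. In negation normal form this is $\forall b_i\,(\neg\delta_i\vee\psi_{i+1})$, and $\neg\delta_i$ is a disjunction of \emph{positive} equalities $b_i=m_k$ and $b_i=b_k$ on universally quantified variables. The definition of \neqFO --- and, more importantly, the correctness of the quantifier elimination in Theorem~\ref{thm:customized} --- requires that universal variables occur only in negated equalities: the last case of direction (b) there relies on the fact that moving a universal witness to a ``fresh'' element can only flip literals containing it from false to true, which fails as soon as a positive equality on a universal variable is present. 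So your claim that the $b_i$ ``occur only inside the inequalities of $\gamma_i$ and $\delta_i$'' overlooks the polarity of $\delta_i$, and Theorem~\ref{thm:customized} does not apply to your formula as written. The paper sidesteps this by never constraining Breaker to play legally: it only adds the conjuncts $y_j\neq x_k$ ($j<k$) to Maker's winning condition and relies on the observation (spelled out for \SEA) that a cheating Breaker move never helps Breaker. Your reduction is repaired the same way --- delete $\delta_i$ and supply that argument --- but the repair is a missing step, not a cosmetic one. Your other deviations (a single padded $\ell$-ary relation $\mathit{Win}$ instead of $\mathit{IN}$ and $\mathit{SIZE}$; a nested rather than prenex formula) are harmless, since prenexing only moves existential $w$-variables and preserves the fragment.

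The hardness direction is a plan, not a proof: you name the source problem (\CClique), a move budget $\ell=\mathcal O(k^2)$, and the difficulty (gadgets with polynomially many hyperedges, calibrated so Breaker defeats exactly the non-cliques), and you explicitly defer ``the real work.'' No gadget is constructed and no correctness argument is given, so this direction is a genuine gap. For comparison, the paper discharges hardness with a single observation: the known \wone-hardness construction for \SGHex in \cite{BonnetJS2016TCS} happens to admit only polynomially many $s$--$t$ paths, so each path can be listed explicitly as a hyperedge of a \SMB instance. That observation is the entire content of the paper's hardness proof, and it is precisely the kind of shortcut your sketch is still searching for.
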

The fact that \SMB is \pspace-complete and \wone-complete (and \emph{not} \awstar-complete) may challenge the intuition one has on alternation.
Looking at the classical complexity (\pspace-completeness), it seems that both players have comparable expressivity and impact over the game.
As the game length is polynomially bounded, if the outcome could be determined by only guessing a sequence of moves from one player, then the problem would lie in \np.
Now from the parameterized complexity standpoint, \SMB is equivalent under FPT reductions to guessing the $k$ vertices of a clique (as in the seminal \wone-complete \textsc{$k$-Clique} problem); no alternation there.
Those considerations may explain why it was difficult to believe that \GHex is \emph{not} \awstar-complete as conjectured repeatedly \cite{Scott2009,DowneyF99,DowneyF2013}.

This is also in contrast to \SMM, where the input is a hypergraph $G=(V,E)$, a position, and an integer $\ell$, and the question is whether player 1 has a strategy to be the first player claiming all the vertices of some hyperedge in $\ell$ moves.
\begin{restatable}{theorem}{thmmm}
  \label{thm:mm}
  \SMM is \awstar-complete.
\end{restatable}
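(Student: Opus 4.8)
The plan is to prove both that \SMM $\in\awstar$ and that \SMM is \awstar-hard. For membership I would give an \fpt-reduction to \MCFO, which is \awstar-complete. Encode an instance $(G,\sigma,\ell)$, with $G=(V,E)$ and starting position $\sigma$, as a relational structure over universe $V\cup E$ carrying the incidence relation $\mathit{IN}$ and three unary predicates marking the vertices claimed by player~1 in $\sigma$, claimed by player~2 in $\sigma$, and still free. Then write a first-order sentence $\Phi_\ell$ with quantifier prefix $\exists x_1\forall y_1\cdots\exists x_\ell\forall y_\ell$ whose matrix simulates the play of the moves $x_1,y_1,x_2,y_2,\dots$ in order and asserts that player~1 is the first player to own a full hyperedge within $\ell$ rounds, with the usual conventions (a player's illegal move is a loss for that player; a filled board with no completed hyperedge is not a win for player~1). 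The only non-obvious ingredient is the subformula ``player~$p$ owns a full hyperedge after round~$i$'', namely $\exists e\,\forall v\,(\mathit{IN}(v,e)\rightarrow\chi_i^p(v))$, where $\chi_i^p(v)$ is the disjunction of ``$v$ was claimed by $p$ in $\sigma$'' with the $O(i)$ equalities of $v$ to the vertices picked by $p$ through round~$i$; ``first'' is enforced by additionally demanding that player~2 owns no full hyperedge after any round $j<i$. The sentence has size polynomial in $\ell$ and depends on nothing else, so this is an \fpt-reduction and \SMM $\in\awstar$.

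For hardness I would reduce from \MCFO. Given a structure $\mathcal A$ and a sentence $\varphi$, put $\varphi$ in prenex form $Q_1z_1\cdots Q_mz_m\,\psi$ with $\psi$ quantifier-free, and build a Maker-Maker hypergraph together with a starting position that plays out in two interacting phases of total length $O(\card{\varphi})$. In the \emph{selection phase} the players pick elements $a_1,\dots,a_m$ of the universe of $\mathcal A$, one vertex per move chosen from a ``column'' of $\card{A}$ vertices, with $a_i$ effectively chosen by player~1 when $Q_i=\exists$ and by player~2 when $Q_i=\forall$. Since the players strictly alternate while the quantifier pattern of $\varphi$ is arbitrary, the turn parity is controlled by \emph{threat gadgets}: whenever the wrong player is to move, the other owns all but one free vertex of some hyperedge and thus threatens an immediate win, forcing a blocking dummy move; the threats are staggered in the starting position so that none is live prematurely. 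A related device propagates player~2's choices into player~1's territory: the move in which player~2 claims her choice vertex $c_i^{a_i}$ turns a pre-seeded hyperedge $H_i^{a_i}$ into a one-move threat whose only missing vertex is a player~1 ``record'' vertex $d_i^{a_i}$, forcing player~1 to claim it. In the \emph{evaluation phase}, which simulates the formula game on $\psi$ (player~1 resolving $\vee$'s, player~2 resolving $\wedge$'s, leaves tested against the relations of $\mathcal A$), a gadget wired from the record and $\exists$-column vertices --- all now player~1-owned --- and from the tuples of $\mathcal A$ lets player~1 complete a hyperedge before player~2 precisely when the assignment $z_i\mapsto a_i$ satisfies $\psi$, again using auxiliary threats to force the play through the intended configuration. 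Correctness then gives $\mathcal A\models\varphi$ iff player~1 wins the \SMM instance within the move bound, which is $O(\card{\varphi})$; hence \SMM is \awstar-hard, and with membership, \awstar-complete.

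I expect the main obstacle to be the correctness of the threat and propagation gadgets: one must check that neither player can ever profit by ignoring an active threat and racing to complete some other hyperedge, that the simulated alternation is faithful (player~1 genuinely controls exactly the $\exists$/$\vee$ choices and player~2 exactly the $\forall$/$\wedge$ choices, with no slack leaking through the dummy and record moves), and that all move counts stay bounded by a function of $\card{\varphi}$; the leaf tests in the evaluation phase, especially for negated atoms, are the fiddliest part. This is also where \SMM departs from \SMB (which is only \wone-complete, see Theorem~\ref{thm:mb}): because every hyperedge is a winning set for \emph{both} players, player~2's moves carry real offensive content rather than being purely obstructive, and it is exactly this that supplies the extra alternation needed to climb from \wone to \awstar. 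An alternative route would reduce from a short game already known to be \awstar-complete, localizing all the combinatorial work to a game-to-game simulation.
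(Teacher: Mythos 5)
Your membership argument is essentially the paper's: the same universe $V\cup E$ carrying the incidence relation $\mathit{IN}$ and unary predicates for the initial position, the same alternating prefix $\exists x_1\forall y_1\cdots$, the same $\exists e\,\forall v$ subformula for ``player $p$ owns a full hyperedge after round $i$'', and legality plus first-to-complete handled exactly as you describe, giving a formula of size polynomial in $\ell$; this half is fine. For hardness you take a genuinely different route: you reduce from \MCFO itself, manufacturing the alternation out of the quantifier prefix via selection columns, parity/threat gadgets, a record mechanism propagating the universal player's choices, and an evaluation-phase formula game on the matrix. The paper instead reduces from \SGG on bipartite graphs --- the ``alternative route'' you mention in your closing sentence. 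That choice buys a lot: the source problem is already a strictly alternating game with one move per turn, so there is no quantifier-to-turn translation, no propagation of $\forall$-choices, and no leaf tests for (negated) atoms; what remains is one gadget per existential vertex, one per universal vertex, and a family of delay gadgets whose forcing behaviour is proved once (Lemma~\ref{lem:delay1}, Corollary~\ref{cor:delay}) and reused throughout, with the move count tracked explicitly as $l=9(k+1)+6$. Your route is the classical way to establish \awstar-hardness and can in principle be made to work, but the obstacles you flag are exactly where such constructions fail and none of them is discharged: you must show that the universal player cannot profit by choosing outside her column (deviations must be canonically reinterpretable or immediately losing), that player~1 cannot ignore a record threat and race to some other hyperedge, and that $\neg R$ leaves are testable (typically by adding complement relations to the structure, still an \fpt-reduction). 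As written, the hardness half is a plausible blueprint rather than a proof, while the membership half stands on its own.
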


For the remaining type of positional games, the \SEA problem has as input a hypergraph $G=(V,E)$, a position, and an integer $\ell$, and the question is whether player 1 has a strategy to claim $\ell$ vertices that forces player 2 to complete a hyperedge. Again, player 1 can only block some moves of player 2, and the winning condition for player 1 can be expressed in \neqFO.
\begin{restatable}{theorem}{thmea}
  \label{thm:ea}
  \SEA is \cowone-complete.
\end{restatable}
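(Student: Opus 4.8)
The plan is to establish both directions by relating \SEA to \SMB via a natural duality, together with the model-checking characterization of \wone (Theorem~\ref{thm:customized} and the \Mosigmaone-completeness of \wone). For membership in \cowone, I would show that the complement of \SEA --- deciding whether player~2 (Avoider) has a strategy to survive $\ell$ rounds without ever completing a hyperedge --- is in \wone. The key observation is that in an Enforcer-Avoider game, Enforcer never ``finishes'' anything himself; the only way he wins is if Avoider is forced to occupy all vertices of some hyperedge. Thus the winning condition for player~1 (Enforcer) is: there exists a sequence of Enforcer-moves such that, no matter how Avoider replies, one of Avoider's claimed sets contains a hyperedge. Equivalently, the complement condition for Avoider is expressible in \neqFO: there is a choice of Avoider's $\le \ell$ vertices, all distinct and distinct from Enforcer's forced replies, such that for \emph{every} hyperedge $e$, some vertex of $e$ is \emph{not} among Avoider's picks --- the universal quantifier over hyperedges (and over vertices of $e$) appears only in inequalities/non-incidences, which is exactly what \neqFO permits. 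After a careful encoding of a position, of the $\le 2\ell$ rounds, and of the turn structure as a fixed-size first-order formula over the structure $(V \cup E, \mathit{IN}, \text{allocation relations})$, Theorem~\ref{thm:customized} places the complement in \wone, hence \SEA is in \cowone.

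For \cowone-hardness, I would give an \fpt-reduction from the complement of \SMB (which is \cowone-complete since \SMB is \wone-complete by Theorem~\ref{thm:mb}), or, more directly, exhibit an \fpt-reduction from the \cowone-complete problem \Mosigmaone (i.e., \textsc{co-MC($\Sigma_1$)}). The cleanest route is the first: take an instance of \SMB on a hypergraph $H=(V,E)$ with a position and bound $\ell$, where the question is whether Maker can complete a hyperedge in $\ell$ moves. I want to build an Enforcer-Avoider instance $H'$ in which \emph{Avoider} plays the role of \emph{Maker}: Enforcer wants to force Avoider to occupy a full hyperedge, which is precisely what Maker was trying to do voluntarily. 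The subtlety is the asymmetry of move order and the fact that in \SMB Breaker is actively obstructing, whereas in \SEA Enforcer is merely claiming (blocking) vertices. But Breaker's only power in \SMB is also just to claim vertices and thereby block Maker; so a move-by-move simulation should go through, possibly after padding with ``dummy'' isolated vertices to fix parity and to ensure neither player is ever forced to help the opponent accidentally. One must check that Enforcer, when simulating Breaker, is never forced to complete a hyperedge for Avoider by running out of safe vertices --- this is handled by adding enough dummy vertices belonging to no hyperedge. Then player~1 wins the \SEA instance in $\ell$ moves iff player~1 (Maker) wins the \SMB instance in $\ell$ moves, and $\ell' = \ell$, giving the \fpt-reduction; taking complements yields \cowone-hardness of \SEA.

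The main obstacle I anticipate is the hardness direction, specifically making the simulation between Breaker's obstruction and Enforcer's claiming faithful under the reversed win condition: one has to ensure that Enforcer is \emph{forced} to mimic Breaker's optimal play rather than being able to ``throw the game'' by occupying hyperedge-vertices himself (which harms no one in EA), and symmetrically that Avoider cannot escape by declining to build a hyperedge in ways Maker could not. I expect the fix to be a gadget that makes all of $V$ relevant to some hyperedge while adding a large pool of neutral dummy vertices so that any non-simulating move is strictly dominated; verifying that this gadget preserves the winner on both sides, for every interleaving of the $\le 2\ell$ moves, is the technically delicate part. The membership direction, by contrast, should be routine once the \neqFO encoding of a bounded-length EA game is written out, since the alternation collapses: Enforcer's moves are existentially guessed, Avoider's replies are determined-up-to-symmetry, and the ``bad hyperedge'' condition is purely universal-in-inequalities.
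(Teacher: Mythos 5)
Both directions of your proposal have genuine gaps. For membership, the \neqFO encoding you sketch universally quantifies over hyperedges and tests them with literals such as $\neg\mathit{IN}(z,e)$; but in \neqFO a universally quantified variable may occur \emph{only} in inequalities $x_i\neq x_j$, so ``universal--in--non-incidences'' is exactly what the fragment forbids. The paper circumvents this by keeping the hyperedge test entirely on existentially quantified variables: it introduces, for each $i\le l$, an $i$-ary relation $\mathit{EDGE_i}$ containing all permutations of the size-$i$ hyperedges, and conjoins $\neg\mathit{EDGE_i}(z_1,\dots,z_i)$ over all $i$-element subsets of Avoider's (existential) moves, giving a formula of size $\mathcal{O}(l^l)$ --- still an \fpt-reduction. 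The only universal variables are Enforcer's moves, which occur solely in inequalities; one must then also argue, as the paper does, that the formula may omit any legality check on Enforcer's moves, i.e.\ that a winning Enforcer strategy that ``cheats'' by claiming occupied vertices can be converted into a non-cheating one. Your sketch addresses neither the quantification obstacle nor the cheating issue.

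For hardness, the reduction from \SMB is both unconstructed and aimed in the wrong direction. If, as you state, Maker wins the \SMB instance iff Enforcer wins the constructed \SEA instance, you have an \fpt-reduction from \SMB to \SEA, which establishes \wone-hardness of \SEA, not \cowone-hardness; complementing a many-one reduction does not flip which side is hard. Moreover the simulation itself faces the obstacle you flag: Maker completes a hyperedge because she wants to, whereas Avoider will simply decline, and adding neutral dummy vertices only makes declining easier; no gadget forcing Avoider to imitate Maker is exhibited, and it is unclear one exists. The paper instead reduces \textsc{Independent Set} directly to the co-problem of \SEA: each vertex $v$ of $G$ becomes a clique $\{v(1),\dots,v(k+1)\}$ of size-two hyperedges, together with all pairs $\{u(i),v(j)\}$ for edges $uv$ of $G$. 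Enforcer cannot occupy all $k+1$ copies of any vertex within $k$ rounds, so Avoider survives $k$ moves iff $G$ has an independent set of size $k$; here the hypergraph itself makes every ``conflicting'' pair of Avoider's picks a loss, so avoidance is equivalent to independence and no Maker--Breaker simulation is needed.
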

Our results suggest that a structured board may suggest that a positional game is \fpt, but otherwise, the complexity depends on how the winning condition for player 1 can be expressed. If it only depends on what positions player 1 has reached, our results suggest that the problem is \wone-complete, but when the winning condition for player 1 also depends on the position player 2 has reached, the game is probably \awstar-complete.

\section{\SConnect is \fpt}
\label{sec:connect}
Graph $G$ represents an $m \times n$ board in the following sense.
Every board cell is represented by a vertex.
Horizontal, vertical and diagonal neighbouring cells are connected via an edge.
Vertex sets $V_1$ and $V_2$ represent the vertices already occupied by Player 1 and Player 2.
While integer $p$, the number of stones to be placed during a move, is part of the input, we restrict it to values below constant $k$ as games with $p\geq k$ are trivial.

\paramdecisionproblem{
  \SConnect}{
  A graph $G=(V,E)$ representing an $m \times n$ board, occupied vertices $V_1, V_2 \subseteq V$, and integer $p$ and $l$.}{
  $l$.}{
  Decide whether Player 1 has a winning strategy with at most $l$ moves.}

\thmconnect*
\begin{proof}
  We reduce \SConnect to first-order model checking \textsc{MC(\FO)} on a bounded local treewidth structure.
  Using a result by Frick and Grohe~\cite{FrickG2001}, it follows that \SConnect is \fpt.
  Let $(G, V_1, V_2, p, l)$ be an instance of \SConnect, where $G = (V,E)$.
  We construct instance $(\mathcal{A},\varphi)$ of \textsc{MC(\FO)} as follows.
  Let $\mathit{EDGE}$ be a binary relation symbol and let $\mathit{V1}$ and $\mathit{V2}$ be unary relation symbols.
  Then $\mathcal{A}$ is the $\{\mathit{EDGE},\mathit{V1},\mathit{V2}\}$-structure $(V,\mathit{EDGE}^\mathcal{A},\mathit{V1}^\mathcal{A},\mathit{V2}^\mathcal{A})$ with $\mathit{EDGE}^\mathcal{A} \coloneqq E$, $\mathit{V1}^\mathcal{A} \coloneqq V_1$, and $\mathit{V2}^\mathcal{A} \coloneqq V_2$.
  FO-formula $\varphi$ is defined as $\varphi \equiv \exists x_1^1 \exists x_1^2 \dots \exists x_1^p \forall y_1^1 \dots \forall y_1^p \exists x_2^1 \dots \exists x_2^p \forall y_2^1 \dots \exists x_l^p
  \exists u_1 \exists u_2 \dots \exists u_k \forall v_1 \forall v_2 \dots \forall v_k \psi$,  
  \begin{align*}
  \psi \equiv \bigvee_{i = 0}^l \Big[ & \mathit{legalP1}_i(x_1^1,\dots,x_1^p,y_1^1,\dots,x_l^p) \land \Big( \neg \mathit{legalP2}_i(x_1^1,\dots,x_1^p,y_1^1,\dots,x_l^p) \vee \\
    & \quad \Big( \mathit{configP1}_i(x_1^1,\dots,x_l^p,u_1,\dots,u_k) \land \bigwedge_{j = 1}^{k - 2} \gAligned{u_{j},u_{j + 1},u_{j + 2}} \land \\
    & \quad\quad \Big( \neg \mathit{configP2}_i(y_1^1,\dots,y_l^p,v_1,\dots,v_k) \lor \neg \bigwedge_{j = 1}^{k - 2} \gAligned{v_{j},v_{j + 1},v_{j + 2}} \Big) \Big) \Big) \Big]
  \end{align*}
  \vspace*{-0.5cm}
  \begin{align*}
    \gpath{u,v,w} \equiv& \, \mathit{EDGE}(u,v) \land \mathit{EDGE}(v,w),\\
    \galigned{u,v,w} \equiv& \, \exists x \exists y \, \gpath{u,v,w} \land \gpath{u,x,w} \land \gpath{u,y,w} \land \gpath{x,v,y} \land\\
        & \forall z \Big[ \big( z \neq v \land z \neq x \land z \neq y \big) \rightarrow \neg \gpath{u,z,w} \Big],\\
    \gdaligned{u,v,w} \equiv& \, \gpath{u,v,w} \land \forall x \Big[ x \neq v \implies \neg \gpath{u,x,w} \Big],\\
    \gAligned{u,v,w} \equiv& \, \galigned{u,v,w} \lor \gdaligned{u,v,w}.
  \end{align*}
  Variables $x_i^j$ represent the $j$th stone in Player 1's $i$th move and variables $y_i^j$ represent the $j$th stone in Player 2's $i$th move.
  The sequences $u_1 \dots u_k$ and $v_1 \dots v_k$ represent possible winning configurations for Player 1 and Player 2.
  The overall structure of $\psi$ is the following.
  The first disjunction ranging from $i=0$ to $i=l$ represents the number of moves Player 1 needs to win the game.
  We then ensure that the $x$ variables represent legal moves by Player 1.
  Further, either variables $y$ do not represent legal moves by Player 2, or Player 1 achieved a winning configuration.
  For the latter, we assure that variables $u$ represent aligned vertices occupied by Player 1.
  Finally, we check that Player 2 did not achieve a winning configuration before, that is vertices $v$ do not represent aligned vertices occupied by Player 2.
  
  Formula $\gpath{u,v,w}$ expresses that there is a path of length 2 between vertices $u$ and $w$ via $v$ ($\mathit{configP1}_i$ and $\mathit{configP2}_i$ ensure that the arguments are disjoint vertices).
  Formula $\galigned{u,v,w}$ expresses that vertices $u$, $v$, and $w$ are aligned horizontally or vertically in this order.
  A case analysis shows that $u, v$ and $w$ are horizontally or vertically aligned if and only if there are exactly three nodes at distance 1 of $u$ and $w$, and that $v$ is in the middle of the other two.
  In case $u, v$ and $w$ are located on one of the border lines of the board, there are exactly two nodes at distance 1.
  Formula $\gdaligned{u,v,w}$ expresses that vertices $u$, $v$, and $w$ are diagonally aligned in this order.
  This is the case if there exists no other length 2 path between $u$ and $w$.
  Formula $\gAligned{u,v,w}$ expresses that vertices $u$, $v$, and $w$ are aligned (in that order).
  Formula $\mathit{legalP1}_i$ (see Appendix~\ref{app:subformulas-connect}) ensures that variables $x_i^j$ represent legal moves of Player 1, that is vertices not contained in $V_1$ or $V_2$ or previously played vertices.
  Analogously, $\mathit{legalP2}_i$ ensures that variables $y_i^j$ represent legal moves of Player 2.
  Formula, $\mathit{configP1}_i$ (see Appendix~\ref{app:subformulas-connect}) expresses that variables $u_1,\dots,u_k$ form a valid configuration of exactly $k$ vertices out of the set of $V_1$ or vertices played by Player 1.
  Analogously, $\mathit{configP2}_i$ states that variables $v_1,\dots,v_k$ form a valid configuration of exactly $k$ vertices out of the set of $V_2$ or vertices played by Player 2.
  The size of $\varphi$ is polynomial in $l$, $k$, and $p$.
  Since $k$ is a constant and $p$ is bounded by $k$, we have an FO formula polynomial in our parameter $l$.
  Graph $G$ represents a grid with diagonals.
  Hence, $G$ has maximum degree 8.
  It follows from Seese~\cite{Seese1996} that Short Connect is \fpt.
\end{proof}

\section{\Customized is \wone-complete}
The class \neqFO contains all first-order formulas of the form $Q_1 x_1 Q_2 x_2 Q_3 x_3 \dots Q_k x_k \varphi$, with $Q_i \in \{ \forall , \exists \}$ and $\varphi$ being a quantifier free first-order formula such that every $\forall$-quantified variable $x_i$ only occurs in inequalities, that is in relations of the form $x_i \neq x_j$ for some variable $x_j$.
Furthermore, $\varphi$ does not contain any other variables besides $x_1,\dots,x_k$.

\thmcustomized*
\begin{proof}
  Hardness: Every $\Sigma_1$ formula is contained in the class \neqFO.
  Hence, \wone-hardness follows from \wone-completeness of \Mosigmaone.

  \begin{sloppypar}
  Membership:
  By reduction to \Mosigmaone.
  Let $(\mathcal{A},\varphi)$ be an instance of \Customized.
  If $\varphi$ contains only existential quantifiers then $(\mathcal{A},\varphi)$ is already an instance of \Mosigmaone.
  Hence, let $\varphi = Q_1 x_1 Q_2 x_2 \dots Q_{i-1} x_{i-1} \forall x_i \exists x_{i+1} \exists x_{i+2} \dots \exists x_k \psi$ with $Q_j \in \{\forall,\exists\}$ for $1 \leq j < i$, $\psi$ is in negation normal form and $\card{\varphi}=l$.
  That is, $x_i$ is the rightmost of the universal quantified variables.
  In order to reduce $(\mathcal{A},\varphi)$ to an instance of \Mosigmaone, we need a way to remove all universal quantifications.
  We will show how to eliminate the universal quantification of $x_i$.
  This technique can then be used to iteratively eliminate all the universal quantifiers.
  Let $\varphi_1(x_1,\dots,x_{i-1})$ be the subformula $\varphi_1(x_1,\dots,x_{i-1}) = \forall x_{i} \exists x_{i+1} \dots \exists x_k \psi$.
  We will show that we can replace $\varphi_1(x_1,\dots,x_{i-1})$ by
  \begin{align}
    \varphi_2(x_1,\dots,x_{i-1}) =&\, \exists y_{i} \exists y_{i+1} \dots \exists y_{k} \Big( \psi[y_{i}/x_{i},y_{i+1}/x_{i+1}, \dots ,y_k/x_k] \land\\
    & \bigwedge_{j = 1}^{i-1} \exists y_{i+1}^j \exists y_{i+2}^j \dots \exists y_k^j  \psi[x_j/x_{i}, y_{i+1}^j/x_{i+1}, y_{i+2}^j/x_{i+2}, \dots , y_k^j/x_k] \land\\
     & \bigwedge_{j = i + 1}^k \exists y_{i+1}^j \exists y_{i+2}^j \dots \exists y_k^j  \psi[y_j/x_{i}, y_{i+1}^j/x_{i+1}, y_{i+2}^j/x_{i+2}, \dots , y_k^j/x_k] \Big).
  \end{align}
  This reduction is an \fpt-reduction, since the size of formula $\varphi_2$ is a function of the size of formula $\varphi_1$.
  Let $c_1,\dots,c_{i-1}$ be arbitrary but fixed elements of the universe $A$ of $\mathcal{A}$.
  We will show that $\varphi_1(x_1,\dots,x_{i-1}) \equiv \varphi_2(x_1,\dots,x_{i-1})$ by proving (a) $\varphi_1(c_1,\dots,c_{i-1}) \rightarrow \varphi_2(c_1,\dots,c_{i-1})$ and (b) $\varphi_2(c_1,\dots,c_{i-1}) \rightarrow \varphi_1(c_1,\dots,c_{i-1})$.
  For (a) assume that $\varphi_1(c_1,\dots,c_{i-1})$ is true.
  This means, $\varphi_1[c_i/x_i]$ is true for all $c_i\in A$, that is for all $c_i\in A$ there exists an assignment to $x_{i+1},\dots,x_k$ such that $\psi$ is true.
  Part (1) of $\varphi_2(c_1,\dots,c_{i-1})$ asks for some $c_i \in A$ such that there exists an assignment to $x_{i+1},\dots,x_k$ such that $\psi$ is true.
  Part (2) asks for the existence of an assignment to $x_{i+1},\dots,x_k$ such that $\psi$ is true for each of the cases where $x_i$ is one of the elements $c_1,\dots,c_{i-1}$.
  Part (3) asks for the existence of an assignment to $x_{i+1},\dots,x_k$ such that $\psi$ is true for each of the cases where $x_i$ is one of the elements that are assigned to $x_{i+1},\dots,x_k$ in the model of Part (1).
  All these are special cases of the universal quantification over $x_i$, hence $\varphi_2(c_1,\dots,c_{i-1})$ is true.
  \end{sloppypar}

  For direction (b) assume towards a contradiction that $\varphi_1(c_1,\dots,c_{i-1})$ is false and that $\varphi_2(c_1,\dots,c_{i-1})$ is true.
  Since $\varphi_1$ is false, there exists $c_i \in A$ such that $\varphi_1[c_i/x_i]$ is false.
  We perform a case distinction on the value $c_i$.
  First let $c_i = c_j$ for some $j \in \{1,\dots,i-1\}$.
  Then let $c_{i+1},\dots,c_k$ be the assignments to variables $y_{i+1}^j,\dots,y_k^j$ in the model of $\varphi_2$.
  The $j$th conjunct of Part (2) of $\varphi_2$ states that $\psi$ holds for $x_i=x_j$ using the assignment $c_{i+1},\dots,c_k$.
  Hence, assigning $c_{i+1},\dots,c_k$ to variables $x_{i+1},\dots,x_k$ in $\varphi_1$ is a model for $\varphi_1[c_i/x_i]$, which contradicts our assumption.
  As the next case, let $c_{i+1},\dots,c_k$ be the assignment to variables $y_{i+1},\dots,y_k$ in the model of $\varphi_2$ and let $c_i = c_j$ for some $j\in \{i+1,\dots,k\}$.
  Let $c_{i+1}',\dots,c_k'$ be the assignments to variables $y_{i+1}^j,\dots,y_k^j$ in the model of $\varphi_2$.
  The conjunct with index $j$ of Part (3) of $\varphi_2$ states that $\psi$ holds for $x_i=x_j=c_j$ using the assignment $c_{i+1}',\dots,c_k'$.
  Hence, assigning $c_{i+1}',\dots,c_k'$ to variables $x_{i+1},\dots,x_k$ in $\varphi_1$ is a model for $\varphi_1[c_i/x_i]$, which contradicts our assumption.
  For the last case, let $c_i$ be one of the remaining values.
  Let $l_1,\dots,l_m$ be all the literals in $\psi$ that contain $x_i$.
  All of them are inequalities of the form $x_i \neq x_j$ for $j \neq i$.
  Let $c_i'$ be the assignment to $y_i$ in the model of $\varphi_2$.
  Let $l_1',\dots,l_m'$ be the literals in $\psi[y_{i}/x_{i},y_{i+1}/x_{i+1}, \dots ,y_k/x_k]$ in Part (1) of $\varphi_2$ that correspond to $l_1,\dots,l_m$.
  We have no knowledge about the truth value of these literals $l_j'$ with $1 \leq j \leq m$, but all of the literals $l_j$ in $\psi$ evaluate to true when assigning $c_{i+1},\dots,c_k$ to the variables $x_{i+1},\dots,x_k$.
  Since $\psi$ is in negation normal form and the literals $l_1,\dots,l_m$ never occur in unnegated form, that is as equalities, changing the truth value of these literal from false to true will never result in changing the truth value of the whole formula from true to false.
  But since $c_i'$ together with $c_{i+1},\dots,c_k$ is a model of Part (1) of $\varphi_2$, it holds that for all values of $c_i$ that we consider in this case, that $\varphi_1[c_i/x_i]$ is true, which contradicts our assumption.
  This completes the case distinction and we have $\varphi_1(x_1,\dots,x_{i-1}) \equiv \varphi_2(x_1,\dots,x_2)$.
\end{proof}

\section{\SGHex is \wone-complete}
\paramdecisionproblem{
  \SGHex}{
  Graph $G=(V,E)$, vertices $s,t\in V$, vertex sets $V_1,V_2\subseteq V$ with $V_1\cap V_2 = \emptyset$, and integer $l$.}{
  $l$.}{
  Decide whether Player 1 has a winning strategy with at most $l$ moves in the generalized Hex game $(G,s,t,V_1,V_2)$.}

A generalized Hex game $(G,s,t,V_1,V_2)$ is a positional game $(V',E')$, where the positions $V'$ and the winning configurations $E'$ are defined as follows.
Set $V'$ contains all vertices of $G$, that is $V' = V$.
Set $E'$ contains a set of vertices $\{v_1,\dots,v_k\}$ if and only if $\{v_1,\dots,v_k\} \cup \{s,t\}$ form an $s-t$ path in $G$.
Additionally, vertices in $V_1$ and $V_2$ are already claimed by player 1 and player 2, respectively.
Since the set of winning configurations of \SGHex is only defined implicitly, the input size of \SGHex can be exponential smaller than the number of winning configurations.

\thmhex*
\begin{proof}
  Hardness is already known~\cite{BonnetJS2016TCS}.
  For membership, we reduce \SGHex to \Customized.
  Let $(G,s,t,V_1,V_2,l)$ be an instance of \SGHex, where $G = (V,E)$.
  Claimed vertices $V_1$ and $V_2$ can be preprocessed: (i) every $v\in V_1$ and its incident edges are removed from $G$ and the neighbourhood of $v$ is turned into a clique;
  (ii) every $v\in V_2$ and its incident edges are removed from $G$.
  Hence, w.l.o.g.\ we assume that $V_1=V_2=\emptyset$.
  We construct an instance $(\mathcal{A},\varphi)$ of \Customized as follows.
  Let $\mathit{EDGE}$ be a binary relation symbol and let $S$ and $T$ be unary relation symbols.
  Then $\mathcal{A}$ is the $\{\mathit{EDGE},S,T\}$-structure $(V,\mathit{EDGE}^\mathcal{A},S^\mathcal{A},T^\mathcal{A})$ with $\mathit{EDGE}^\mathcal{A} \coloneqq E$, $S^\mathcal{A} \coloneqq \{s\}$, and $T^\mathcal{A} \coloneqq \{t\}$.
  The \neqFO-formula $\varphi$ is defined as $\varphi = \exists s \exists t \exists x_1 \forall y_1 \exists x_2 \forall y_2 \dots \forall y_{l-1} \exists x_l \exists z_1 \exists z_2 \dots \exists z_l \psi$, with
  \vspace*{-7mm}
  \begin{flalign*}
    \psi \equiv S(s) \wedge T(t) \wedge \Big( \mathit{EDGE}(s, t) \lor & \bigvee_{i=1}^l \bigvee_{j=1}^i \Big( \mathit{EDGE}(s,z_1) \land \mathit{EDGE}(z_j,t) \land&\\
    & \mathit{path}_{i,j}(x_1,\dots,x_i,z_1,\dots,z_j) \land  \mathit{diff}_i(x_1,y_1,\dots,y_{i-1},x_i) \Big) \Big),&
  \end{flalign*}
  \vspace*{-0.9cm}
  \begin{flalign*}
    \mathit{path}_{i,j}(x_1,\dots,x_i,z_1,\dots,z_j) \equiv& \bigwedge_{h=1}^{j-1} \mathit{EDGE}(z_h,z_{h+1}) \land \bigwedge_{h=1}^j \bigvee_{k=1}^i z_h = x_k,&\\
    \mathit{diff}_i(x_1,y_1,\dots,x_{i-1},y_{i-1},x_i) \equiv& \bigwedge_{1 \leq j < k \leq i} x_j \neq x_k \land \bigwedge_{1 \leq j < k \leq i} y_j \neq x_k.&
  \end{flalign*}
  \vspace*{-0.5cm}

  The intuition of $\varphi$ is the following.
  The variables $x_i$, $y_i$, and $z_i$ represent the moves of Player 1, the moves of Player 2, and the ordered $(s,t)$-path induced by Player 1's moves, respectively.
  The variables $s$ and $t$ represent the vertices of the same name.
  Formula $\varphi$ expresses that there is either a direct edge between $s$ and $t$ or a $s$-$t$ path of length $j$ was played.
  The main disjunctions ($\bigvee$) ensure that we consider wins that take up to $l$ moves, and build $s$-$t$ path of length up to $l$.
  Subformula $\mathit{path}_{i,j}$ will be true if and only if the $z$ variables form a path using only values of the selected values for the $x$ variables.
  Subformula $\mathit{diff}_i$ ensures that all $x$ variables are pairwise distinct and they are distinct from all $y$ variables with smaller index.

  We have $\card{\varphi} = \mathcal{O}(l^4)$, so this is indeed an \fpt-reduction and \wone-membership follows.
\end{proof}

\section{\SMB is \wone-complete}

\paramdecisionproblem{
  \SMB}{
  Hypergraph $G=(V,E)$, vertex sets $V_1,V_2\subseteq V$ with $V_1\cap V_2 = \emptyset$, and integer $l$.}{
  $l$.}{
  Decide whether Player 1 has a winning strategy with at most $l$ if vertices $V_1$ and $V_2$ are already claimed by Player 1 and Player 2, respectively.}

\thmmb*
\begin{proof}
  For membership, we reduce \SMB to \Customized.
  Let $(G,V_1,V_2,l)$ be an instance of \SMB, where $G = (V,E)$ is a hypergraph.
  Claimed vertices $V_1$ and $V_2$ can be preprocessed: (i) every $v\in V_1$ is removed from $V$ and every hyperedge $e\in E$;
  (ii) every $v\in V_2$ is removed from $V$ and every hyperedge $e\in E$ with $v \in e$ is removed from $E$.
  Hence, w.l.o.g.\ we assume that $V_1=V_2=\emptyset$.
  We construct an instance $(\mathcal{A},\varphi)$ of \Customized as follows.
  Let $\mathit{IN}$ and $\mathit{SIZE}$ be binary relation symbols.
  Then $\mathcal{A}$ is the $\{\mathit{IN},\mathit{SIZE}\}$-structure $(V\cup E \cup \{1,\dots,\card{V}\},\mathit{IN}^\mathcal{A},\mathit{SIZE}^\mathcal{A})$ with $\mathit{IN}^\mathcal{A} \coloneqq \{(x,e) \mid x \in V, e \in E, x \in e\}$ and $\mathit{SIZE}^\mathcal{A} \coloneqq \{(e,i) \mid e \in E, \card{e}=i \}$.
  Hence, the universe of $\mathcal{A}$ consists of the vertices of $G$, an element for each hyperedge, and an element for some bounded number of integers.
  The \neqFO-formula $\varphi$ is defined as $\varphi \equiv \exists x_1 \forall y_1 \dots \forall y_{l-1} \exists x_l \exists e \exists z_1 \exists z_2 \dots \exists z_l \psi$, with
\begin{flalign*}
  & \psi \equiv \hspace{-1mm}\bigvee_{1 \leq j \leq i \leq l}\hspace{-1mm} \Big(\mathit{diff}_i(x_1,y_1,\dots,x_i) \land \mathit{SIZE}(e,j) \land \bigwedge_{k = 1}^j \bigvee_{h = 1}^i z_k = x_h \land \hspace{-2mm}\bigwedge_{1 \leq k < h \leq j}\hspace{-2mm} z_k \neq z_h \land \bigwedge_{k = 1}^j \mathit{IN}(z_k,e)
  \Big). &
\end{flalign*}
The subformula $\mathit{diff}_i(x_1,y_1,\dots,x_i)$ refers to the subformula with same name used in the proof of Theorem~\ref{thm:hex}.
That is, it ensures that all $x$ variables are pairwise distinct and that they are distinct from all $y$ variables with smaller index.
The intuition of $\varphi$ is the following.
The variables $x_i$ and $y_i$ represent the moves of Maker and the moves of Breaker, respectively.
The variables $z_i$ represent the vertices forming the winning configuration of Maker and $e$ represents the hyperedge of this winning configuration.
The first disjunction ensures that we consider wins that take up to $l$ moves.
The second disjunction ensures that we consider winning configurations that consist of up to $i$ vertices.
After checking that $e$ has the correct size ($\mathit{SIZE}(e,j)$), we encode that the values of the $z$ variables are contained in the hyperedge represented by $e$ and that these variables are pairwise disjoint and selected among the moves of Maker (the $x$ variables).

We have $\card{\varphi} = \mathcal{O}(l^4)$, so this is indeed an \fpt-reduction and \wone-membership follows.

  For hardness, we reduce \CClique to \SMB.
  The reduction is essentially the same as the reduction used for showing \wone-hardness of \GHex~\cite{BonnetJS2016TCS}.
  The crucial observation is that the construction of \citet{BonnetJS2016TCS} contains only a polynomial number of possible $s-t$ paths.
  Hence, we can encode every such $s-t$-path as a unique hyperedge denoting a winning configuration in \SMB.
\end{proof}

\section{\SMM is \awstar-complete}
\paramdecisionproblem{
  \SMM}{
  Hypergraph $G=(V,E)$, vertex sets $V_1,V_2\subseteq V$ with $V_1\cap V_2 = \emptyset$, and integer $l$.}{
  $l$.}{
  Decide whether Player 1 has a winning strategy with at most $l$ if vertices $V_1$ and $V_2$ are already claimed by Player 1 and Player 2.}

\thmmm*
\begin{proof}
  For membership, we reduce \SMM to \MCFO.
  Let $(G,V_1,V_2,l)$ be an instance of \SMM, where $G = (V,E)$ is a hypergraph.
  We construct an instance $(\mathcal{A},\varphi)$ of \MCFO as follows.
  Let $\mathit{V1}$, $\mathit{V2}$, and $\mathit{EDGE}$ be unary relation symbols.
  Let $\mathit{IN}$ be a binary relation symbol.
  Then $\mathcal{A}$ is the $\{\mathit{V1}, \mathit{V2}, \mathit{EDGE}, \mathit{IN}\}$-structure $(V\cup E, \mathit{V1}^\mathcal{A}, \mathit{V2}^\mathcal{A}, \mathit{EDGE}^\mathcal{A}, \mathit{IN}^\mathcal{A})$ with $\mathit{V1}^\mathcal{A} \coloneqq V_1$, $\mathit{V2}^\mathcal{A} \coloneqq V_2$, $\mathit{EDGE}^\mathcal{A} \coloneqq E$, and $\mathit{IN}^\mathcal{A} \coloneqq \{(x,e) \mid x \in V, e \in E, x \in e\}$.
  Hence, the universe of $\mathcal{A}$ consists of the vertices and the hyperedges of $G$.
  The \FO-formula $\varphi$ is defined as $\varphi \equiv \exists x_1 \forall y_1 \dots \forall y_{l-1} \exists x_l \psi$, with
\vspace{-4mm}
\begin{flalign*}
  \psi \equiv \bigvee_{i = 0}^l \mathit{legalP1}_i(x_1, y_1, \dots, x_l) \land \Big( & \neg\mathit{legalP2}_{i-1}(x_1, y_1, \dots, x_l) \vee &\\
    & \big(\mathit{winP1}_i(x_1,y_1,\dots, x_l) \wedge \neg\mathit{winP2}_{i-1}(x_1,y_1,\dots, x_l) \big)\Big).&
\end{flalign*}
\\[-15mm]
\begin{flalign*}
  &\mathit{winP1}_i(x_1,y_1,\dots,x_l) \equiv \exists e \forall z \mathit{EDGE}(e) \wedge \big(\neg \mathit{IN}(z, e) \vee \mathit{V1}(z) \vee \bigvee_{j = 1}^i z = x_j\big),&\\
  &\mathit{winP2}_i(x_1,y_1,\dots,x_l) \equiv \exists e \forall z \mathit{EDGE}(e) \wedge \Big(\neg \mathit{IN}(z, e) \vee \mathit{V2}(z) \vee \bigvee_{j = 1}^i z = y_j\Big).
\end{flalign*}
Variable $x_j$ represent Player 1's $j$th move and variable $y_j$ represent Player 2's $j$th move.
The first disjunction represents the number of moves $i$ that Player 1 needs to win the game.
Formula $\mathit{legalP1}_i$ (see Appendix~\ref{app:subformulas-mm}) ensures that variables $(x_j)_{1 \leq j \leq i}$ represent legal moves of Player 1, that is vertices not contained in $V_1$ or $V_2$ or previously played vertices.
Analogously, $\mathit{legalP2}_i$ ensures that variables $(y_j)_{1 \leq j \leq i}$ represent legal moves of Player 2.
Formula $\mathit{winP1}_i$ ensures that Player 1 has won within the $i$ first moves, that is, it has completed a hyperedge with $V_1$ and variables up to $x_i$.
Analogously, $\mathit{winP2}_i$ ensures that Player 2 has won within the $i$ first moves.
We have $\card{\varphi} = \mathcal{O}(l^3)$ and $\card{\mathcal{A}}= \mathcal{O}(\card{G}^2)$, so this is indeed an \fpt-reduction and \awstar-membership follows.

For hardness, we reduce from the \awstar-complete problem \SGG on bipartite graphs.
The reduction is deferred to the appendix.
\end{proof}

\section{\SEA is \cowone-complete}
\paramdecisionproblem{
  \SEA}{
  Hypergraph $G=(V,E)$, vertex sets $V_1,V_2\subseteq V$ with $V_1\cap V_2 = \emptyset$, and integer $l$.}{
  $l$.}{
  Decide whether Player 1 has a winning strategy with at most $l$ moves if vertices $V_1$ and $V_2$ are already claimed by Player 1 and Player 2, respectively.}

\thmea*
\begin{proof}
  We show that the co-problem of \SEA is \wone-complete.
  The co-problem of \SEA is to decide whether for all strategies of Enforcer, there exists a strategy of Avoider such that during the first $l$ moves, Avoider does not claim a hyperedge.
  Again, vertices $V_1$ and $V_2$ are already claimed by Enforcer and Avoider, respectively.
  We prove \wone-hardness by a parameterized reduction from \textsc{Independent Set} and \wone-membership by reduction to \Customized.

  In the \wone-complete \textsc{Independent Set} problem \cite{DowneyF99}, the input is a graph $G=(V,E)$ and an integer parameter $k$, and the question is whether $G$ has an independent set of size $k$, i.e., a set of $k$ pairwise non-adjacent vertices.
  We construct a positional game $G'=(V',E')$ by replacing each vertex of $G$ by a clique of size $k+1$.
  The vertex set $V'$ has vertices $v(1),\dots,v(k+1)$ for each vertex $v\in V$, and hyperedges are $E' = \{\{v(i),v(j)\} : v\in V \text{ and } 1\le i<j\le k+1\} \cup \{\{u(i),v(j)\} : uv\in E \text{ and } 1\le i,j\le k+1\}$.
  We claim that $G$ has an independent set of size $k$ if and only if Avoider does not claim a hyperedge in the first $k$ moves in the positional game $G'$ starting from the empty position, that is $V_1 = V_2 = \emptyset$.
  For the forward direction, suppose $I=\{v_1,\dots,v_k\}$ is an independent set of $G$ of size $k$.
  Then, a winning strategy for Avoider is to claim an unclaimed vertex from $\{v_i(1),\dots,v_i(k+1)\}$ at round $i\in\{1,\dots,k\}$.
  We note that Enforcer cannot claim all the vertices from $\{v_i(1),\dots,v_i(k+1)\}$, since there are not enough moves to do so, and Avoider does not complete a hyperedge with this strategy.
  On the other hand, suppose Avoider has a winning strategy in $k$ moves.
  For an arbitrary play by Enforcer, let $\{v_1(i_1),\dots,v_k(i_k)\}$ denote the vertices claimed by Player 1.
  Then, $v_i \ne v_j$ and $v_iv_j\notin E$ for any $1\le i<j\le k$, since Player 1 would otherwise claim all the vertices of a hyperedge.
  Therefore, $\{v_1,\dots,v_k\}$ is an independent set of $G$ of size $k$.

  For membership, we reduce to \Customized.
  Let $(G,V_1,V_2,l)$ be an instance of the co-problem of \SEA where $G = (V,E)$ is a hypergraph.
  First we do some preprocessing.
  We remove all vertices from $G$ that are contained in $V_2$, that is the vertices already claimed by Avoider.
  If this results in an empty hyperedge, the instance is a no-instance.
  Otherwise, we remove all hyperedges that contain a vertex in $V_1$, that is the vertices already claimed by Enforcer, since Avoider will never lose via these edges anymore.
  Finally, we remove all vertices from $G$ that are contained in $V_1$.
  Let $G = (V,E)$ now refer to the outcome of this preprocessing.
  By construction all vertices of $G$ are unoccupied and some vertices might not be contained in any hyperedge.
  If $G$ contains less than $2 l$ vertices we can solve the problem via brute force in \fpt time.
  Hence, in what follows we assume that there are at least $2 l$ unoccupied vertices in $G$.
  We construct an instance $(\mathcal{A},\varphi)$ of \Customized as follows.
  Let $\mathit{EDGE_i}$ be a $i$-ary relation symbol for $1 \leq i \leq l$.
  Then $\mathcal{A}$ is the $\{\mathit{EDGE_1},\dots,\mathit{EDGE_l}\}$-structure $(V,\mathit{EDGE_1}^\mathcal{A},\dots,\mathit{EDGE_l}^\mathcal{A})$ with $\mathit{EDGE_i}^\mathcal{A} \coloneqq \{(v_1,\dots,v_i) \mid e \in E, \card{e}=i, e = \{v_1,\dots,v_l\}\}$, that is $\mathit{EDGE_i}^\mathcal{A}$ contains every permutation of all hyperedges of cardinality $i$.
  The \neqFO-formula $\varphi$ is defined as
\begin{equation*}
  \varphi \equiv \forall y_1 \exists x_1 \forall y_2 \exists x_2 \dots \exists x_l \; \mathit{diff}_l(y_1,x_1,\dots,x_l) \land \bigwedge_{1 \leq i \leq l} \bigwedge_{\{z_1,\dots,z_i\}\subseteq\{x_1,\dots,x_l\}} \neg \mathit{EDGE_i}(z_1,\dots,z_i),
\end{equation*}
where $\mathit{diff}_i(y_1,x_1,\dots,x_i) \equiv \bigwedge_{1 \leq j < k \leq i} x_j \neq x_k \land \bigwedge_{1 \leq j \leq k \leq i} y_j \neq x_k.$

  Subformula $\mathit{diff}_i(y_1,x_1,\dots,x_i)$ ensures that all $x$ variables are pairwise distinct and they are distinct from all $y$ variables with index less or equal theirs.
  The intuition of $\varphi$ is the following.
  The variables $x_i$ and $y_i$ represent the moves of Avoider and the moves of Enforcer, respectively.
  Avoider wins if the $x$ variables do not cover a whole hyperedge after $l$ moves.
  We only have to check hyperedges of size up to $l$.
  Hence, for each cardinality $i \leq l$, we check for all subsets $z_1,\dots, z_l$ of the $x$ variables that they do not form a hyperedge.
  Formula $\varphi$ does not pose any restrictions on the $y$ variables, that is we do not force Enforcer to pick unoccupied vertices.
  We call a move by Enforcer that picks an already occupied vertex cheating.
  To prove correctness, we need to show that whenever Enforcer has a winning strategy $\sigma_E$ that involves cheating, Enforcer also has a winning strategy $\sigma'_E$ without cheating.
  We construct $\sigma'_E$ as follows.
  We follow strategy $\sigma_E$ while $\sigma_E$ does not perform a cheating move.
  If the next move would be a cheating move, we play a random unoccupied vertex instead and keep track of this vertex in a new set $V_r$.
  The next time we need to select a move, we construct a board state $s$ by removing all vertices in $V_r$ from the picks of Enforcer and query strategy $\sigma_E$ on this state $s$.
  If the answer is an unoccupied vertex, we perform this move normally.
  If instead the answer is a previously played vertex (which might be in $V_r$), we play a random unoccupied vertex instead and add it to $V_r$.
  Since $\sigma_E$ was a winning strategy, so is $\sigma'_E$.
  Hence, formula $\varphi$ does not need to check if the $y$ variables correspond to unoccupied vertices.
  The construction can be done by an \fpt algorithm since for each hyperedge $e \in E$ of cardinality $i$, we create $i!\leq l!$ entries in the $\mathit{EDGE_i}$ relation.
  We have $\card{\varphi} = \mathcal{O}(l^l)$, so this is indeed an \fpt reduction and \wone-membership follows.
\end{proof}

\section{Conclusion}
We have seen that the parameterized complexity of short positional games depends crucially on whether both players compete for achieving winning sets, or whether the game can be seen as one player aiming to achieve a winning set and the other player merely blocking the moves of the first player. Naturally, blocking moves correspond to inequalities in first-order logic, and our \neqFO fragment of first-order logic therefore captures that the universal player can only block moves of the existential player. Our \wone-completeness of \Customized has been used several times in this paper, but our transformation of \neqFO formulas into \Sigmaone formulas may have other uses.
As a concrete example related to positional games, \citet{BonnetJS2016TCS} established that \SHex is \fpt by expressing the problem as a \FO\ formula, and making use of Frick and Grohe's meta-theorem \cite{FrickG2001}, similarly as we did in Section \ref{sec:connect}.
This establishes that the problem is \fpt but the running time is non-elementary in $l$.
However, we remark that their \FO\ formula is actually a \neqFO formula of size polynomial in $l$. Our transformation gives an equivalent \Sigmaone formula whose length is single-exponential in $l$, and the meta-theorem of \citet{GroheW2004} then gives a running time for solving \SHex that is  triply-exponential in $l$.

\paragraph*{Acknowledgments}
We thank anonymous reviewers for helpful comments and we thank Yijia Chen and Paul Hunter for bringing \citeauthor{GroheW2004}'s work to our attention.
Serge Gaspers is the recipient of an Australian Research Council (ARC) Future Fellowship (FT140100048).
Abdallah Saffidine is the recipient of an ARC DECRA Fellowship (DE150101351).
This work received support under the ARC's Discovery Projects funding scheme (DP150101134).

\bibliographystyle{plainurl}%

\begin{thebibliography}{10}

\bibitem{AbrahamsonDF93}
Karl~R. Abrahamson, Rodney~G. Downey, and Michael~R. Fellows.
\newblock Fixed-parameter intractability {II} (extended abstract).
\newblock In {\em Proceedings of the 10th Annual Symposium on Theoretical
  Aspects of Computer Science ({STACS} 93)}, volume 665 of {\em Lecture Notes
  in Computer Science}, pages 374--385. Springer, 1993.
\newblock \href {http://dx.doi.org/10.1007/3-540-56503-5_38}
  {\path{doi:10.1007/3-540-56503-5_38}}.

\bibitem{AbrahamsonDF95}
Karl~R. Abrahamson, Rodney~G. Downey, and Michael~R. Fellows.
\newblock Fixed-parameter tractability and completeness {IV:} on completeness
  for {W[P]} and {PSPACE} analogues.
\newblock {\em Annals of Pure and Applied Logic}, 73(3):235--276, 1995.
\newblock \href {http://dx.doi.org/10.1016/0168-0072(94)00034-Z}
  {\path{doi:10.1016/0168-0072(94)00034-Z}}.

\bibitem{BonnetJS2016TCS}
{\'E}douard Bonnet, Florian Jamain, and Abdallah Saffidine.
\newblock On the complexity of connection games.
\newblock {\em Theoretical Computer Science}, 644:2--28, 2016.
\newblock \href {http://dx.doi.org/10.1016/j.tcs.2016.06.033}
  {\path{doi:10.1016/j.tcs.2016.06.033}}.

\bibitem{Byskov2004}
Jesper~Makholm Byskov.
\newblock {M}aker-{M}aker and {M}aker-{B}reaker games are {PSPACE}-complete.
\newblock Technical Report RS-04-14, BRICS, Department of Computer Science,
  Aarhus University, 2004.

\bibitem{DowneyF99}
Rodney~G. Downey and Michael~R. Fellows.
\newblock {\em Parameterized complexity}.
\newblock Springer-Verlag, New York, 1999.

\bibitem{DowneyF2013}
Rodney~G. Downey and Michael~R. Fellows.
\newblock {\em Fundamentals of Parameterized Complexity}.
\newblock Springer, 2013.

\bibitem{EvenT1976}
Shimon Even and Robert~Endre Tarjan.
\newblock A combinatorial problem which is complete in polynomial space.
\newblock {\em Journal of the {ACM}}, 23(4):710--719, 1976.
\newblock \href {http://dx.doi.org/10.1145/321978.321989}
  {\path{doi:10.1145/321978.321989}}.

\bibitem{FlumG1998}
Jörg Flum and Martin Grohe.
\newblock {\em Parameterized Complexity Theory}.
\newblock Springer, 1998.

\bibitem{FominM12}
Fedor~V. Fomin and D{\'{a}}niel Marx.
\newblock {FPT} suspects and tough customers: Open problems of downey and
  fellows.
\newblock In Hans~L. Bodlaender, Rod Downey, Fedor~V. Fomin, and D{\'{a}}niel
  Marx, editors, {\em The Multivariate Algorithmic Revolution and Beyond -
  Essays Dedicated to Michael R. Fellows on the Occasion of His 60th Birthday},
  volume 7370 of {\em Lecture Notes in Computer Science}, pages 457--468.
  Springer, 2012.
\newblock \href {http://dx.doi.org/10.1007/978-3-642-30891-8_19}
  {\path{doi:10.1007/978-3-642-30891-8_19}}.

\bibitem{FrickG2001}
Markus Frick and Martin Grohe.
\newblock Deciding first-order properties of locally tree-decomposable
  structures.
\newblock {\em Journal of the ACM}, 48(6):1184--1206, 2001.

\bibitem{GroheW2004}
Martin Grohe and Stefan Wöhrle.
\newblock An existential locality theorem.
\newblock {\em Annals of Pure and Applied Logic}, 129(1):131--148, 2004.
\newblock \href {http://dx.doi.org/10.1016/j.apal.2004.01.005}
  {\path{doi:10.1016/j.apal.2004.01.005}}.

\bibitem{HalesJ63}
Alfred~W. Hales and Robert~I. Jewett.
\newblock Regularity and positional games.
\newblock {\em Transactions of the American Mathematical Society},
  106:222--229, 1963.

\bibitem{HsiehT2007}
Ming~Yu Hsieh and Shi-Chun Tsai.
\newblock On the fairness and complexity of generalized $k$-in-a-row games.
\newblock {\em Theoretical Computer Science}, 385(1):88--100, 2007.
\newblock \href {http://dx.doi.org/10.1016/j.tcs.2007.05.031}
  {\path{doi:10.1016/j.tcs.2007.05.031}}.

\bibitem{Reisch1980}
Stefan Reisch.
\newblock Gobang ist {PSPACE}-vollst{\"a}ndig.
\newblock {\em Acta Informatica}, 13(1):59--66, 1980.

\bibitem{Reisch81}
Stefan Reisch.
\newblock Hex ist {PSPACE}-vollst{\"{a}}ndig.
\newblock {\em Acta Informatica}, 15:167--191, 1981.
\newblock \href {http://dx.doi.org/10.1007/BF00288964}
  {\path{doi:10.1007/BF00288964}}.

\bibitem{Schaefer1978}
Thomas~J. Schaefer.
\newblock On the complexity of some two-person perfect-information games.
\newblock {\em Journal of Computer and System Sciences}, 16(2):185--225, 1978.
\newblock \href {http://dx.doi.org/10.1016/0022-0000(78)90045-4}
  {\path{doi:10.1016/0022-0000(78)90045-4}}.

\bibitem{Scott2009}
Allan Scott.
\newblock {\em On the Parameterized Complexity of Finding Short Winning
  Strategies in Combinatorial Games}.
\newblock PhD thesis, University of Victoria, 2009.

\bibitem{ScottS08}
Allan Scott and Ulrike Stege.
\newblock Parameterized chess.
\newblock In {\em Proceedings of the 3rd International Workshop on
  Parameterized and Exact Computation ({IWPEC} 2008)}, volume 5018 of {\em
  Lecture Notes in Computer Science}, pages 172--189. Springer, 2008.
\newblock \href {http://dx.doi.org/10.1007/978-3-540-79723-4_17}
  {\path{doi:10.1007/978-3-540-79723-4_17}}.

\bibitem{ScottS10}
Allan Scott and Ulrike Stege.
\newblock Parameterized pursuit-evasion games.
\newblock {\em Theoretical Computer Science}, 411(43):3845--3858, 2010.
\newblock \href {http://dx.doi.org/10.1016/j.tcs.2010.07.004}
  {\path{doi:10.1016/j.tcs.2010.07.004}}.

\bibitem{Seese1996}
Detlef Seese.
\newblock Linear time computable problems and first-order descriptions.
\newblock {\em Mathematical Structures in Computer Science}, 6(6):505--526,
  1996.

\end{thebibliography}

\appendix
\section{Subformulas for Theorem~\ref{thm:connect}}
\label{app:subformulas-connect}

  \begin{align*}
    \mathit{legalP1}_i(x_1^1,\dots,x_1^p,y_1^1,\dots,x_l^p) \equiv \bigwedge_{j = 1}^i \bigwedge_{t = 1}^p \Big[ & \neg \mathit{V1}(x_j^t) \land \neg \mathit{V2}(x_j^t) \land \bigwedge_{r = 1}^{j-1} \bigwedge_{q = 1}^{t} (x_j^t \neq x_r^q) \land \\
    & \bigwedge_{q = 1}^{t - 1} (x_j^t \neq x_j^q) \land \bigwedge_{r = 1}^{j-1} \bigwedge_{q = 1}^{t} (x_j^t \neq y_r^q) \Big],\\
    \mathit{legalP2}_i(x_1^1,\dots,x_1^p,y_1^1,\dots,x_l^p) \equiv \bigwedge_{j = 1}^{i-1} \bigwedge_{t = 1}^p \Big[ & \neg \mathit{V1}(y_j^t) \land \neg \mathit{V2}(y_j^t) \land \bigwedge_{r = 1}^{j-1} \bigwedge_{q = 1}^{t} (y_j^t \neq y_r^q) \land \\
    & \bigwedge_{q = 1}^{t - 1} (y_j^t \neq y_j^q) \land \bigwedge_{r = 1}^{j-1} \bigwedge_{q = 1}^{t} (y_j^t \neq x_r^q) \Big].
  \end{align*}

  \begin{align*}
     \mathit{configP1}_i(x_1^1,\dots,x_l^p,u_1,\dots,u_k) \equiv & \bigwedge_{j = 1}^k \left[ \left( \mathit{V1}(u_j) \lor \bigvee_{r = 1}^i \bigvee_{q = 1}^p u_j = x_r^q \right) \land \bigwedge_{r = 1}^{j-1} u_j \neq u_r \right],\\
     \mathit{configP2}_i(y_1^1,\dots,y_l^p,v_1,\dots,v_k) \equiv & \bigwedge_{j = 1}^k \left[ \left( \mathit{V2}(v_j) \lor \bigvee_{r = 1}^{i-1} \bigvee_{q = 1}^p v_j = y_r^q \right) \land \bigwedge_{r = 1}^{j-1} v_j \neq v_r \right].\\
  \end{align*}

\section{Subformulas for Theorem~\ref{thm:mm}}
\label{app:subformulas-mm}

\begin{flalign*}
  \mathit{legalP1}_i(x_1,y_1,\dots,x_l) \equiv& \bigwedge_{1 \leq j \leq i} \Big[ \neg \mathit{V1}(x_j) \land \neg \mathit{V2}(x_j) \Big] \wedge \bigwedge_{1 \leq j < k \leq i} \Big[ x_j \neq x_k \land y_j \neq x_k \Big],&\\
  \mathit{legalP2}_i(x_1,y_1,\dots,x_l) \equiv& \bigwedge_{1 \leq j \leq i} \Big[ \neg \mathit{V1}(y_j) \land \neg \mathit{V2}(y_j) \Big] \wedge \hspace{-1mm}\bigwedge_{1 \leq j < k \leq i} (y_j \neq y_k) \land \hspace{-1mm}\bigwedge_{1 \leq j \leq k \leq i} \hspace{-1mm}x_j \neq y_k.\\
\end{flalign*}
\section{\awstar-hardness of \SMM}

Reduction from the \awstar-complete problem \SGG on bipartite graphs.
\SGG is played by two players on a bipartite graph.
Players alternate in picking a vertex that is a neighbour of the previously picked vertex of the opponent.
A vertex can only be picked, if it has not already been picked during the game.
A player loses if there is no legal move left for her.

\paramdecisionproblem{
  \SGG}{
  Bipartite graph $(X \uplus Y,F)$, start vertex $v_0 \in X$ and integer $k$.}{
  $k$.}{
  Decide whether Player 1 has a winning strategy that needs at most $k$ moves.}

From an instance $B=(X \uplus Y, F, v_0), k$ of \SGG, with $v_0 \in X$, we build a hypergraph $G=(V, E), l$ of size polynomial in $|B|$ which will be an equivalent \SMM instance.

In our reduction, the hypergraph $G$ mainly involves two distinguished vertices $\exists, \forall \in V$ and gadgets corresponding to vertices and edges of $B$.
In the initial setup, the vertex $\exists$ is assumed to have already been claimed by Player 1 and the vertex $\forall$ to have already been claimed by Player 2.
Our construction ensures that all the hyperedges of $E$ contain exactly one vertex in $\{\exists, \forall\}$.
We thus partition the hyperedges between the ones that can make Player 1 win and the ones that can make Player 2 win.

Formally, $G$ is defined as indicated in Equations~\eqref{eq:MMdefV} and~\eqref{eq:MMdefE}.
It uses gadgets $\vexists{\cdot}$, $\vforall{\cdot}$, $V^D_{4}(\cdot)$, $E^{\exists}(\cdot)$, $E^\forall(\cdot)$, $D_{4}^{\exists}(\cdot)$, $D_{4}^{\forall}(\cdot)$ detailed in the rest of this section.
The parameter is linearly preserved from the input parameter: $l=9(k+1)+6$.
\begin{align}
\label{eq:MMdefV}
V & = \{\exists, \forall\} \cup \bigcup_{u \in X} \vexists{u} \cup \bigcup_{u \in Y} \vforall{u} \cup \vdelay{4}{\exists} \cup \vdelay{4}{\forall}\\
\label{eq:MMdefE}
E & = \{\{\forall, a^{v_0}\}\} \cup \bigcup_{u \in X} E^{\exists}(u) \cup \bigcup_{u \in Y} E^\forall(u) \cup  \delae{4}{\exists} \cup \delaa{4}{\forall}
\end{align}

$V_1 = \{\exists\}, V_2 = \{\forall\}$

\subsection{Terminology}
A \emph{useless} 3-threat for Player 1 is a 3-threat that can be defended, and for which after the 3-threat and its defense, Player 1 has not achieved anything.
Formally, the threat and its defense are two vertices which, once played, do not appear in any other hyperedges that could make one player or their opponent win.
Note that those threats can be disregarded for Player 1 but not for Player 2.
Indeed, Player 2 could use a series of useless 3-threats to win by delaying the game.

A \emph{losing} 3-threat for a player is a 3-threat that can be met with a counter-attack winning in a constant number of moves; more precisely in at most 6 moves.

A \emph{living} 3-threat is a non losing 3-threat; if it is for Player 1, it should in addition be non useless.

\subsection{Delay gadget}
As a building block of the forthcoming existential and universal gadgets, we introduce the following delay gadgets where $? \in \{\exists, \forall\}$.
If $?=\exists$ (resp. $?=\forall$), we say that the delay gadget belongs to Player 1 (resp. to Player 2).
\begin{flalign}
&\delai{?}{1}{S} := \{S \cup \{?, x^S_1\}, S \cup \{?, x^S_2\}, S \cup \{?, x^S_3\}\} &\\
&\delai{?}{2}{S} := \hspace{-1mm}\bigcup_{i,j \in [3]}\hspace{-1mm} \{S \cup \{?,x^S_i,y^S_j\}\} =
 \left\{\begin{array}{@{}c@{}}
 S \cup \{?, x^S_1, y^S_1\}, S \cup \{?, x^S_2, y^S_1\}, S \cup \{?, x^S_3, y^S_1\},\\
 S \cup \{?, x^S_1, y^S_2\}, S \cup \{?, x^S_2, y^S_2\}, S \cup \{?, x^S_3, y^S_2\},\\
 S \cup \{?, x^S_1, y^S_3\}, S \cup \{?, x^S_2, y^S_3\}, S \cup \{?, x^S_3, y^S_3\}\end{array}\right\} &\\
&\delai{?}{4}{S} := \bigcup_{g,h,i,j \in [3]} \{S \cup \{?,x^S_g,y^S_h,z^S_i,t^S_j\}\}&
\end{flalign}
\begin{flalign}
&\vdelayb{1}{S} := \{x^S_1, x^S_2, x^S_3\} &\\
&\vdelayb{2}{S} := \bigcup_{i \in [3]} \{x^S_i,y^S_i\} = \{x^S_1, x^S_2, x^S_3, y^S_1, y^S_2, y^S_3\}&\\
&\vdelayb{4}{S} := \bigcup_{i \in [3]} \{x^S_i,y^S_i,z^S_i,t^S_i\}&
\end{flalign}
The elements $x^S_i$, $y^S_i$, $z^S_i$, and $t^S_i$ (with $i \in [3]$) will only appear in the corresponding delay gadgets.
For any set $S$, we will introduce at most one set among $\delaab{1}{S}$, $\delaeb{1}{S}$, $\delaab{2}{S}$, $\delaeb{2}{S}$, $\delaab{4}{S}$, and $\delaeb{4}{S}$.
This implies that existing $x^S_i$ and $y^S_i$ (with $i \in [3]$) are well-defined.

\begin{lemma}
  \label{lem:delay1}
  Let $\delta\in\{1,2,4\}$ and $S \subseteq V$ be a set of vertices such that $\delaeb{\delta}{S} \subseteq E$ (resp. $\delaab{\delta}{S} \subseteq E$).
  If all vertices of $S$ have been claimed by Player 1 (resp. Player 2), and if no more than one vertex of $\vdelayb{\delta}{S}$ has been claimed by the opponent, then she (resp. he) has an unstoppable $\delta$-threat.
\end{lemma}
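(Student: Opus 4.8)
The plan is to describe, for each $\delta\in\{1,2,4\}$, an explicit strategy for the player who owns the gadget and to check it never gets stuck. By symmetry I would only treat a gadget $\delaeb{\delta}{S}$ with every vertex of $S$ claimed by Player~1; a gadget $\delaab{\delta}{S}$ owned by Player~2 is handled by swapping the two players and swapping $\exists$ with $\forall$. First the structural facts: since $\exists\in V_1$ throughout the game, Player~1 owns all of $S\cup\{\exists\}$ and, as $V_1\cap V_2=\emptyset$, Player~2 owns none of it; and partitioning $\vdelayb{\delta}{S}$ into $\delta$ \emph{blocks} of three vertices each ($\{x^S_1,x^S_2,x^S_3\}$ for $\delta\ge 1$; also $\{y^S_1,y^S_2,y^S_3\}$ for $\delta\ge 2$; also $\{z^S_1,z^S_2,z^S_3\}$ and $\{t^S_1,t^S_2,t^S_3\}$ for $\delta=4$), a hyperedge of $\delaeb{\delta}{S}$ is exactly $S\cup\{\exists\}$ together with one vertex from each block, because the vertices $x^S_i,y^S_i,z^S_i,t^S_i$ lie in no other hyperedge. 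Hence Player~1 completes such a hyperedge --- and so wins, since $\delaeb{\delta}{S}\subseteq E$ --- precisely when she has claimed one vertex in every block, and Player~2 can only interfere by playing inside $\vdelayb{\delta}{S}$, any other move of his being effectively a pass. So it suffices to show that Player~1 can, over her next $\delta$ moves, claim one then-free vertex from each of the $\delta$ blocks, whatever Player~2 does.

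The strategy I would use is greedy: on each of her next $\delta$ moves Player~1 claims a free vertex of an as-yet-untouched block (one in which she owns no vertex) carrying the most Player-2 vertices among untouched blocks. For $\delta=1$ this is immediate --- the hypothesis leaves at least $3-1=2$ free vertices in the single block, so Player~1 finishes on her next move whoever plays first. For $\delta=2$ it is nearly so: before Player~1's first move Player~2 owns at most one vertex of $\vdelayb{\delta}{S}$ (at most two if it is his turn first), so Player~1 fills the more-threatened block first; the other block then carries at most one Player-2 vertex, so at most two after Player~2's reply, so still has a free vertex for Player~1's second move.

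The real work is $\delta=4$, where a fixed block order fails (Player~2 could funnel his moves into a block Player~1 has not yet reached and fill all three of its vertices), so the greedy choice must be genuinely adaptive. The claim I would prove is that under the greedy rule \emph{no untouched block ever reaches three Player-2 vertices}; then every untouched block always has a free vertex exactly when Player~1 wants it, and the strategy runs for all $\delta$ moves, ending with one Player-1 vertex in each block. For the claim, suppose otherwise and take the first moment an untouched block $B$ acquires its third Player-2 vertex, necessarily by Player~2's reply to Player~1's $m$-th move for some $m$. Just before that $m$-th move $B$ was untouched with exactly two Player-2 vertices, yet greedy filled some other untouched block $B'$, which therefore also had at least two --- so at that instant at least four Player-2 vertices lay in untouched blocks. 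By then Player~2 has claimed at most $m+1$ vertices of $\vdelayb{\delta}{S}$ (a head start of at most one, one possible move if it was his turn first, and $m-1$ replies), forcing $m\ge 3$; and Player~1 has touched $m-1$ blocks, leaving only $5-m$ untouched, forcing $m\le 3$. So $m=3$, the two untouched blocks $B,B'$ hold all four of Player-2's vertices, two apiece, and Player~2 had a head start and moved first. Tracing back to Player~1's first move, the two vertices Player~2 owned then lay one in $B$ and one in $B'$ --- two in the same one would have made it the unique heaviest untouched block, so greedy would have filled it on move~1 --- which makes $B,B'$ the heaviest untouched blocks at move~1, so greedy fills one of them then, contradicting that both survive untouched to move~3.

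I expect this last counting for $\delta=4$ to be the only real obstacle; the cases $\delta\in\{1,2\}$ and the symmetric statement for $\delaab{\delta}{S}$-gadgets need no new ideas. The remaining bookkeeping --- that the $\delta$ vertices Player~1 claims are pairwise distinct and, with $S\cup\{\exists\}$, really form a hyperedge of the gadget, and that a Player-2 move outside $\vdelayb{\delta}{S}$ only shrinks his budget and cannot help him --- I would settle in a line.
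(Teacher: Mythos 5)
Your proof is correct, and for $\delta\in\{1,2\}$ it is essentially the paper's argument: Player~1 owns $S\cup\{\exists\}$, each hyperedge of $\delaeb{\delta}{S}$ needs exactly one vertex from each block of three, and the opponent can never exhaust a block before Player~1 reaches it. Where you genuinely diverge is in the choice of response rule and in the treatment of $\delta=4$. The paper's proof reacts by \emph{mirroring}: Player~1 answers in the same block the opponent just played in, which for $\delta=2$ immediately reduces to an unstoppable $1$-threat; the $\delta=4$ case is then declared ``similar and omitted.'' You instead use a greedy ``heaviest untouched block'' rule and prove the invariant that no untouched block ever accumulates three opponent vertices, via the budget count $m+1\ge 4$ against the block count $5-m\ge 2$ forcing $m=3$, followed by the backtrack to move~1. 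That counting argument is exactly the content the paper skips, and it is the only place where an adaptive choice (rather than a fixed block order) is actually needed, so your write-up is more complete than the paper's on the one nontrivial case. The only presentational caveat is that your blanket statement ``any other move of his is effectively a pass'' should be read purely locally, as you intend: globally the opponent may of course win faster elsewhere, which is precisely what Corollary~\ref{cor:delay} accounts for.
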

\begin{proof}
The two statements have identical proofs by switching Player 1 and Player 2.
We therefore only give a proof for a delay gadget $\delaeb{\delta}{S}$.
Assume that Player 1 has played all the vertices of $S$.
Without loss of generality, assume that the vertex claimed by the opponent, if any, is $x^S_1$.
Recall that we assume that $\exists$ has already been claimed by Player 1 and $\forall$ has been claimed by Player 2.

For $\delta = 1$,
Player 1 has at least two 1-threats, playing in $x^S_2$ or $x^S_3$, and Player 2 cannot block them both.
Thus, if Player 2 claims $x^S_i$ (with $i \in [3]$), she claims $x^S_j$ with $j \neq i \in [3]$ and wins.

For $\delta = 2$,
Player 1 has several 2-threats.
If Player 2 claims $x^S_i$ (resp. $y^S_i$) for some $i \in [3]$, Player 1 claims $x^S_j$ (resp. $y^S_j$) for some $j \neq i \in \{2,3\}$ and obtains an unstoppable $1$-threat.

For $\delta = 4$, the reasoning is similar and omitted.
\end{proof}

\begin{corollary}
  \label{cor:delay}
  Let $\delta\in\{1,2,4\}$ and $S \subseteq V$ be a set of vertices such that $\delaeb{\delta}{S} \subseteq E$ (resp. $\delaab{\delta}{S} \subseteq E$).
  If Player 1 (resp.~Player 2) claims all vertices in $S$ and no more than one vertex of $\vdelayb{\delta}{S}$ has been claimed by the opponent, then if it is that player's turn, they can force a win in $\delta$ moves unless the opponent has a $\delta-1$-threat.
  If it is the opponent's turn, then Player 1 (resp.~Player 2) can force a win in $\delta$ moves unless the opponent has a $\delta$-threat.
\end{corollary}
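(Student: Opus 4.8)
The plan is to bootstrap from Lemma~\ref{lem:delay1} with a turn-counting argument, using that a delay gadget is isolated from the rest of the board. By the player-swapping symmetry already exploited in Lemma~\ref{lem:delay1}, it suffices to handle the $\delaeb{\delta}{S}$ case: Player~1 owns every vertex of $S$ and $\exists$, and Player~2 owns at most one vertex of $\vdelayb{\delta}{S}$. Lemma~\ref{lem:delay1} hands Player~1 an unstoppable $\delta$-threat, i.e.\ a strategy $\sigma$ that, whatever Player~2 replies, completes some hyperedge of $\delaeb{\delta}{S}$ no later than Player~1's $\delta$-th move; the case analysis in the proof of Lemma~\ref{lem:delay1} for $\delta\in\{1,2\}$ (and, in the same vein, for $\delta=4$) already covers Player~2 claiming further gadget vertices while $\sigma$ runs. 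Thus the only way Player~1 fails to \emph{win} within $\delta$ moves under $\sigma$ is that Player~2 completes a hyperedge strictly before Player~1's $\delta$-th move.

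I would then count moves. From the current position, Player~1's $\delta$-th move is preceded by exactly $\delta-1$ moves of Player~2 if it is Player~1's turn, and by exactly $\delta$ moves of Player~2 if it is Player~2's turn (in the latter case the last of them immediately precedes Player~1's $\delta$-th move). So Player~2 completing a hyperedge strictly earlier means Player~2 completing one within $\delta-1$, respectively $\delta$, of his own moves.

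The crux is the structural observation that the gadget does not interact with the rest of the board: every hyperedge of $\delaeb{\delta}{S}$ contains $\exists\in V_1$, so Player~2 can never complete a gadget hyperedge, and the vertices $x^S_i,y^S_i,z^S_i,t^S_i$ lie in no hyperedge outside this gadget. Consequently $\sigma$, which plays only gadget vertices, neither obstructs nor accelerates any winning attempt of Player~2. This gives both directions. If Player~2 has a $(\delta-1)$-threat (respectively a $\delta$-threat), he abandons defence and executes it, completing a hyperedge before Player~1's $\delta$-th move, so Player~1 cannot force a win in $\delta$ moves, and the exception is necessary. If Player~2 has no such threat, then, since $\sigma$ leaves his part of the board untouched, he cannot complete any hyperedge within $\delta-1$ (respectively $\delta$) of his moves, so $\sigma$ wins for Player~1 within $\delta$ moves.

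The delicate point I expect is legitimising Player~1's commitment to a single strategy $\sigma$ without inadvertently speeding up Player~2's win; the isolation observation is precisely what makes this safe, and it is also why one must re-check that $\sigma$ still works once Player~2 pours extra moves into the gadget, which is handled by the case analysis underlying Lemma~\ref{lem:delay1}. Everything else is bookkeeping.
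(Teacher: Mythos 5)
Your proof is correct and follows the same route the paper intends: the paper states this corollary without proof as an immediate consequence of Lemma~\ref{lem:delay1}, and your elaboration (unstoppable $\delta$-threat from the lemma, plus turn counting and the observation that the gadget-internal vertices and the $\exists$/$\forall$ anchor isolate the gadget from the rest of the board) is exactly the omitted bookkeeping. The only superfluous part is your argument that the exception is \emph{necessary}, which the corollary does not claim and which would need more care (Player~1 might in principle block the opposing threat and still win), but this does not affect the validity of the direction that is actually asserted.
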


\subsection{Existential vertex gadget}

For each vertex $u \in X$ in the existential partition of the \SGG instance, we introduce in $G$ the following hyperedges:
\begin{align*}
E^{\exists}(u) = & \quad \delae{2}{a^u, b^u} \cup \delaa{2}{b^u, e^u} \cup \delaa{2}{b^u, g^u} \\
& \cup \bigcup_{v\in N(u)} \delae{1}{a^u, c^u_v, d^u_v} \\
 & \cup \delaa{1}{b^u, d^u_v, e^u} \cup \delae{1}{c^u_v, e^u, f^u}\\
 & \cup \delaa{1}{d^u_v, f^u, g^u} \cup \delae{1}{c^u_v, g^u, h^u}\\
 & \cup \delaa{2}{d^u_v, i^u_v} \cup \delae{2}{i^u_v, a^v}
\end{align*}

In terms of vertices of $G$ introduced by the gadget, each vertex $u \in X$ gives rise to a set $\vexists{u}$ that contains all the vertices needed by the delay sub-gadgets along with $ \{a^u, b^u, e^u, f^u, g^u, h^u\} \cup \bigcup_{v \in N(u)} \{c^u_v, d^u_v, i^u_v\}$.

\begin{lemma}
\label{lem:existential-exists}
Consider the gadget for an existential vertex $u \in X$ such that no element of $\vexists{u}\setminus \{a^u\}$ has been claimed yet.
Assume that $a^u$ has been played by Player 1 and that it is Player 2's turn.
If Player 2 has no non-losing 3-threats in the whole board, then for each $v \in N(u)$ such that $a^v$ has not been claimed yet, Player 1 has a strategy $\sigma^{\exists}(u,v)$ that ensures either that Player 2 plays $a^v$ after no more than 8 moves all of which belonging to $\vexists{u}$ and that there are no non-losing 3-threats left for Player 2 in the gadget or that Player 1 wins in no more than 14 moves.
\end{lemma}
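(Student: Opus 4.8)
The plan is to present $\sigma^{\exists}(u,v)$ as a forced walk down the ``ladder'' of delay sub-gadgets that $E^{\exists}(u)$ builds in the branch indexed by $v$: the sub-gadgets come in the order $\delae{2}{a^u,b^u}$, $\delae{1}{a^u,c^u_v,d^u_v}$, $\delaa{1}{b^u,d^u_v,e^u}$, $\delae{1}{c^u_v,e^u,f^u}$, $\delaa{1}{d^u_v,f^u,g^u}$, $\delae{1}{c^u_v,g^u,h^u}$, $\delaa{2}{d^u_v,i^u_v}$, $\delae{2}{i^u_v,a^v}$, together with the two ``dangling'' universal gadgets $\delaa{2}{b^u,e^u}$ and $\delaa{2}{b^u,g^u}$ off $b^u$, and consecutive gadgets overlap, so that the base of each of Player~1's gadgets meets the input of the next universal gadget and vice versa. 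Player~1 already owns $a^u$ and the permanently claimed vertex $\exists$, and plays so as to complete in turn the bases of her sub-gadgets; by Lemma~\ref{lem:delay1}, the moment she owns the whole base of an existential $\delta$-delay gadget (with $\delta\in\{1,2\}$) while Player~2 has spent at most one vertex of the corresponding $\vdelayb{\delta}{\cdot}$ set, she holds an unstoppable $\delta$-threat. I would show that the only reply by Player~2 that does not hand Player~1 such a threat is to claim the unique still-missing base vertex of Player~1's next existential gadget --- which is always the next ``rung'' vertex, i.e.\ $d^u_v$, then one of $e^u,g^u$, then $i^u_v$, and finally $a^v$ --- and that these are exactly the vertices that Player~2's universal gadgets need but never complete, so those gadgets never mature into a threat.

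The engine of the argument is a tempo/threat invariant carried down the ladder: whenever it is Player~1's turn in the middle of the walk she is about to create an unstoppable $\delta$-threat and Player~2 holds no threat of order $\ge\delta$, and whenever it is Player~2's turn Player~1 already holds an unstoppable $\delta$-threat that Player~2 can only neutralise by the designated rung move. I would prove this by induction on the rung, applying Corollary~\ref{cor:delay} at each step (the clause ``unless the opponent has a $(\delta-1)$-threat'' or ``a $\delta$-threat'' depending on whose turn it is), and using the hypothesis that Player~2 has no non-losing $3$-threat anywhere on the board to rule out the only escape: Player~2 answering a threat with a counter-attack instead of the rung move. When Player~2 does follow the rung moves, Player~1's forced replies inside the gadget --- notably claiming $b^u$ at the moment Player~2's $e^u$-move would otherwise start $\delaa{1}{b^u,d^u_v,e^u}$, and symmetrically for $g^u$ --- are tailored so that they simultaneously kill the dangling gadgets $\delaa{2}{b^u,e^u}$, $\delaa{2}{b^u,g^u}$ and preserve the invariant rather than derailing Player~1. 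The walk ends with $\delae{2}{i^u_v,a^v}$: once Player~1 owns $i^u_v$, not playing $a^v$ lets her take it and reach an unstoppable $2$-threat, so Player~2 is finally forced to play $a^v$; and at that point every universal gadget of the branch is exactly one vertex short for Player~2, so no non-losing $3$-threat for Player~2 remains in $\vexists{u}$.

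It then remains to handle Player~2's deviations and count moves. Deviations split into: (i)~a move outside $\vexists{u}$, or a move inside it that is ``useless'' with respect to the threat Player~1 is currently carrying --- then Player~1 simply realises that threat via Lemma~\ref{lem:delay1} and wins; (ii)~Player~2 spending a second vertex of some $\vdelayb{\delta}{\cdot}$ set, or starting a universal gadget out of order --- again Lemma~\ref{lem:delay1} gives Player~1 an unstoppable threat; (iii)~Player~2 replying with a $3$-threat, which by hypothesis is losing, so Player~1 counter-attacks and wins within $6$ moves. Since the walk has at most four rungs and Player~1 at worst realises a $2$-threat (a couple of extra moves) before or instead of the at-most-six-move counter-attack, one obtains a win in at most $14$ moves in all deviation cases; and in the non-deviating case the four rung-moves of Player~1 interleaved with Player~2's four forced replies --- the last being $a^v$ --- use at most $8$ moves, all in $\vexists{u}$.

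The step I expect to be the main obstacle is the tempo/threat invariant: showing that at no rung can Player~2 assemble, in time, a threat strong enough to ignore Player~1's, and that Player~1's own forced defensive moves (the $b^u$/$g^u$ business) never leave her a tempo short of the next rung. Concretely this is a finite but fiddly collection of positions --- for each rung, which vertices of which delay sub-gadgets each player may already hold, and in particular how the two gadgets dangling off $b^u$ interact with the main chain --- and one has to verify that the global hypothesis on $3$-threats is precisely what covers the residual cases. Once the invariant is in place, the remainder is a routine, if lengthy, case check driven by Lemma~\ref{lem:delay1} and Corollary~\ref{cor:delay}.
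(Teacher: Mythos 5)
Your overall framework is the right one --- the paper's proof is exactly a forced walk down the chain of overlapping delay sub-gadgets, applying Lemma~\ref{lem:delay1}/Corollary~\ref{cor:delay} at each rung and using the ``no non-losing 3-threats'' hypothesis to exclude counter-attacks. But the concrete forced line you describe has the roles of several key vertices swapped, and with your line the argument does not just become fiddly, it fails. First, it is Player~2, not Player~1, who must claim $b^u$: the very first step is that if Player~2 does not take $b^u$, Player~1 takes it herself, completes the base of $\delae{2}{a^u,b^u}$, and wins by Corollary~\ref{cor:delay}. After Player~2's forced $b^u$, he \emph{does} own the 3-threat sets $\delaa{2}{b^u,e^u}$ and $\delaa{2}{b^u,g^u}$ (he just has no 2-threat yet), and Player~1 neutralises them over the course of the walk by claiming $e^u$ and $g^u$ herself --- there is no moment at which Player~1 ``claims $b^u$ to kill the dangling gadgets.''

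Second, and more seriously, the forced replies of Player~2 are $d^u_v$, $f^u$, $h^u$ and finally $a^v$, while Player~1 plays $c^u_v$, $e^u$, $g^u$, $i^u_v$; each of Player~1's moves simultaneously sets up the next existential sub-gadget (whose unique missing base vertex is Player~2's next forced reply) and blocks a universal sub-gadget that Player~2 is one vertex away from completing. Your list assigns $e^u$ (or $g^u$) and $i^u_v$ to Player~2. That is not merely a relabelling: if Player~2 were allowed $e^u$ he would complete the base of $\delaa{1}{b^u,d^u_v,e^u}$ and hold an unstoppable 1-threat, and if he were allowed $i^u_v$ he would complete $\delaa{2}{d^u_v,i^u_v}$ --- in either case Player~2 wins, the opposite of what the lemma asserts. (You appear to have imported the defender's move list from the dual statement, Lemma~\ref{lem:existential-forall}, where Player~1 is the one forced through $e^u$, $g^u$, $i^u_v$.) Since the entire content of this lemma is the exhibition and verification of the correct forced sequence, this mis-assignment is a genuine gap rather than a presentational slip; the ``tempo/threat invariant'' you defer to cannot be established for the sequence as you have written it.
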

\begin{proof}
We exhibit the strategy for Player 1 and show that Player 2's answers are forced to prevent Player 1 from winning.
By assumption Player 2 has no non-losing 3-threats anywhere else on the board and no vertices claimed in $\vexists{u}$, so unless Player 2 play $b^u$, Player 1 wins in 6 moves by claiming $b^u$ herself via Corollary~\ref{cor:delay} applied to $\delae{2}{a^u, b^u}$.
Although Player 2 has now sets of 3-threats which involve $e^u$ and $g^u$, he does not have any 2-threats.
Player 1 plays $c^u_v$ which forces Player 2 to claim $d^u_v$ by Corollary~\ref{cor:delay} applied to $\delae{1}{a^u, c^u_v, d^u_v}$.
Player 1 plays $e^u$ which forces Player 2 to claim $f^u$.
Player 1 plays $g^u$ which forces Player 2 to claim $h^u$.
Player 1 plays $i^u_v$.
At this point, 8 moves have been played, Player 2 has no 3-threats left in the gadget, so Player 2 is forced to play $a^v$ lest Player 1 plays $a^v$ and wins in a total of 14 moves by Corollary~\ref{cor:delay} applied to $\delae{2}{i^u_v, a^v}$.

Since Player 1 has claimed $e^u$, $g^u$, and $i^u_v$, the only local hyperedges remaining for Player 2 are $\delaa{2}{d^u_w, i^u_w}$ for $w\neq v$, and none of them feature a 3-threat.
\end{proof}

\begin{lemma}
\label{lem:existential-forall}
Consider the gadget for an existential vertex $u \in X$ such that no element of $\vexists{u}\setminus \{a^u\}$ has been claimed yet.
Assume that for any vertex $v \in Y$, $a^v$ has not been claimed by Player 1.
Assume that $a^u$ has been played by Player 1 and that it is Player 2's turn.
If Player 1 has no living 3-threats elsewhere on the board, then Player 2 has a strategy $\sigma^{\forall}(u)$ that ensures either 1) that after no less than 8 moves, all of which either belong to $\vexists{u}$ or are not in any live existential hyperedge, Player 2 plays $a^v$ for some $v$ and there are no living 3-threats left for Player 1, or it is Player 2's turn and there is no living 3-threat for Player 1; or 2) that Player 2 wins.
\end{lemma}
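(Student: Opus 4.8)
The plan is to follow the proof of Lemma~\ref{lem:existential-exists} almost line for line, but with the two players interchanged and, accordingly, with the \emph{universal} delay sub-gadgets $\delaa{2}{b^u,e^u}$, $\delaa{2}{b^u,g^u}$, $\delaa{1}{b^u,d^u_v,e^u}$, $\delaa{1}{d^u_v,f^u,g^u}$, $\delaa{2}{d^u_v,i^u_v}$ of $E^{\exists}(u)$ taking over the role played there by the existential sub-gadgets $\delae{1}{a^u,c^u_v,d^u_v}$, $\delae{1}{c^u_v,e^u,f^u}$, $\delae{1}{c^u_v,g^u,h^u}$, $\delae{2}{i^u_v,a^v}$. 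First I would observe that, exactly as in Lemma~\ref{lem:existential-exists}, the vertex $b^u$ ends up claimed by Player~2: since Player~1 owns $a^u$ and $\delae{2}{a^u,b^u}\subseteq E$, Corollary~\ref{cor:delay} applied to $\delae{2}{a^u,b^u}$ shows that if Player~2 does not claim $b^u$ now then Player~1 claims it next and wins in a constant number of moves; hence Player~2 opens with $b^u$ (and if Player~2 instead already has an outright win somewhere, we are in case~2). Having claimed $b^u$, Player~2 now holds the set ``$S$'' of both $\delaa{2}{b^u,e^u}$ and $\delaa{2}{b^u,g^u}$, and $b^u\in S$ of $\delaa{1}{b^u,d^u_v,e^u}$ for every $v\in N(u)$: this is precisely the mirror image of the position Player~1 enjoyed after Player~2 was forced to take $b^u$ in Lemma~\ref{lem:existential-exists}.

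The heart of the argument is then a move-by-move forcing analysis dual to the one in that lemma. After $b^u$, Player~1 must answer inside $\delaa{2}{b^u,e^u}\cup\delaa{2}{b^u,g^u}$, for otherwise Player~2 completes one of these sets and, by Lemma~\ref{lem:delay1}, obtains an unstoppable $2$-threat; moreover the only answers that do not hand Player~2 such a threat involve $e^u$ and $g^u$ respectively, and Player~1 cannot deal with both sets in a single move since they meet only in $b^u$. Iterating, I would push Player~1 along the chain $\delaa{1}{b^u,d^u_v,e^u}\to\delaa{1}{d^u_v,f^u,g^u}\to\delaa{2}{d^u_v,i^u_v}$ for a suitable $v\in N(u)$, each step being an application of Corollary~\ref{cor:delay} that forces Player~1's next move while Player~2 claims the corresponding vertex of the gadget, until, after the prescribed number of moves inside $\vexists{u}$ (together with possibly some Player~1 moves lying in no live existential hyperedge, which Player~2 simply ignores), Player~1 is forced to claim $a^v$ via $\delaa{2}{d^u_v,i^u_v}$. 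A local inspection --- identical to the closing paragraph of the proof of Lemma~\ref{lem:existential-exists} --- then shows that the only hyperedges of $E^{\exists}(u)$ still alive involve the remaining $\delae{2}{i^u_w,a^w}$, none of which is a living $3$-threat for Player~1, so the first alternative of conclusion~1) holds; and the moment Player~1 deviates from this forced line we land in the second alternative of conclusion~1) or in conclusion~2).

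I expect the main obstacle to be the asymmetry of the existential vertex gadget: because it was designed to help Player~1, after $b^u$ Player~1 has several branch points (answering through $e^u$ versus through $g^u$, and the choice of which triple $c^u_v,d^u_v$ to touch), and one has to check branch by branch that no response lets Player~1 simultaneously parry Player~2's live threat \emph{and} build a counter-threat of her own --- the same bookkeeping of ``who owns which vertex of which $S$ and which $\vdelayb{\delta}{S}$'' that makes the proof of Lemma~\ref{lem:existential-exists} long, only now carried out from Player~2's less comfortable side. The second delicate point is the correct use of the hypothesis that Player~1 has no living $3$-threat elsewhere: it is what guarantees that any Player~1 move leaving the gadget is harmless (it neither defends nor creates anything Player~2 must answer), so Player~2 may continue her local plan; here the ``useless''/``losing''/``living'' $3$-threat terminology is exactly what lets us discard Player~1's ineffective threats and bound every genuine counter-attack by at most six moves, as needed for the statement.
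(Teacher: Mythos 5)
There is a genuine gap at the heart of your forcing analysis. After Player~2 claims $b^u$, you assert that Player~1 ``must answer inside $\delaa{2}{b^u,e^u}\cup\delaa{2}{b^u,g^u}$''. This is exactly what Player~1 must \emph{not} do: claiming $e^u$ (resp.\ $g^u$) loses outright, since Player~2 then takes the other of the two and Lemma~\ref{lem:delay1} gives an unstoppable 2-threat. Player~1's only non-losing replies are \emph{counter-threats} in the gadgets $\delae{1}{a^u,c^u_v,d^u_v}$ (she already owns $a^u$), and these lie outside the two sets you confine her to. Consequently your chain $\delaa{1}{b^u,d^u_v,e^u}\to\delaa{1}{d^u_v,f^u,g^u}\to\delaa{2}{d^u_v,i^u_v}$ is launched from a premise you never establish: every link requires Player~2 to own $d^u_v$, and Player~2 cannot simply grab it (if Player~1 has just played $c^u_v$, any reply other than $d^u_v$ lets her complete $\delae{1}{a^u,c^u_v,d^u_v}$ and win, and conversely Player~2 playing some $d^u_{v'}$ of his own choosing does not answer the live threat). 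The paper obtains $d^u_v$ only as the \emph{response} to a Player~1 counter-threat, via a case split you omit: if Player~1 plays $d^u_w$, Player~2 answers $c^u_w$ and gains nothing (that branch dies); only when Player~1 plays a vertex other than $d^u_v$ in some $\delae{1}{a^u,c^u_v,d^u_v}$ does Player~2 answer $d^u_v$ and thereby ignite the chain.

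This omission also explains two features of the statement your sketch cannot produce. First, Player~1 may spend all her tempo on $d^u_w$-type threats and never hand Player~2 any $d^u_v$; that is precisely the second alternative of conclusion~1) (``it is Player~2's turn and there is no living 3-threat for Player~1''), which your line of play never reaches. Second, ``no \emph{less} than 8 moves'' is a lower bound --- Player~2 is the delaying player here --- and it holds only because those extra $d^u_w$/$c^u_w$ exchanges cannot shorten the traversal. Finally, the framing as a player-swapped dual of Lemma~\ref{lem:existential-exists} is misleading: the true dual of that lemma is Lemma~\ref{lem:universal-forall}. In the present lemma the \emph{same} vertices $b^u,d^u_v,f^u,h^u,a^v$ end up with Player~2 and $c^u_v,e^u,g^u,i^u_v$ with Player~1, exactly as in Lemma~\ref{lem:existential-exists}; what reverses is who holds the initiative, not who claims what.
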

\begin{proof}
We exhibit a local strategy for Player 2, any move by Player 1 in a non-living 3-threat elsewhere on the board is responded to accordingly.
Player 2 plays $b^u$ creating sets of 3-threats in $\delaa{2}{b^u, e^u}$ and $\delaa{2}{b^u, g^u}$.
Playing either of $e^u$ and $g^u$ is losing for Player 1 because Player 2 can play in the other vertex.
Therefore, Player 1 needs to play in a 3-threat to avoid losing.
Notwithstanding the non-living 3-threats, the only 3-threats for Player 1 can be found in the gadgets $\delae{1}{a^u, c^u_v, d^u_v}$ for $v \in N(u)$.

As long as Player 1 plays in $d^u_w$ for some $w$, Player 2 replies in the corresponding $c^u_w$ voiding the threat.
As soon as Player 1 plays a move other than $d^u_v$ in $\delae{1}{a^u, c^u_v, d^u_v}$ for some $v$, Player 2 can answer $d^u_v$, voiding the threat, and play proceeds as follows.
Player 1 has no 2-threats and so replying $e^u$ is forced to avoid losing via Corollary~\ref{cor:delay} applied to $\delaa{1}{b^u,d^u_v,e^u}$.
Player 2 plays $f^u$ which forces Player 1 to claim $g^u$.
Player 2 plays $h^u$ threatening to play $i^u_v$.
Therefore, Player 1 needs to either play in 3-threats via the gadgets $\delae{1}{a^u, c^u_w, d^u_w}$ for some $w \in N(u)$ such that $d^u_w$ has not been claimed yet, or Player 1 has to play in $i^u_v$ herself.
As long as Player 1 plays in $d^u_w$ for some $w$, Player 2 replies in the corresponding $c^u_w$ voiding the threat.

Eventually, Player 1 has to play in $i^u_v$.
If $a^v$ has already been claimed by Player 2, then Player 2 is left with no 3-threat to defend.
Otherwise, Player 2 plays $a^v$.
\end{proof}

\subsection{Universal vertex gadget}
For each $u \in Y$, we introduce in $G$ the following hyperedges:

\begin{align*}
E^{\forall}(u) = & \quad \delaa{2}{a^u,b^u} \cup \delae{2}{b^u, g^u} \cup \delae{2}{b^u, i^u}\\
  & \cup \bigcup_{v\in N(u)} \delaa{1}{a^u, c^u_v, d^u_v}\\
  & \cup \delae{1}{b^u, d^u_v, e^u_v} \cup \delaa{1}{c^u_v, e^u_v, f^u}\\
  & \cup \delae{1}{b^u, c^u_v, j^u_v} \cup \delaa{1}{d^u_v, j^u_v, f^u}\\
  & \cup \delae{1}{c^u_v, f^u, g^u} \cup \delae{1}{d^u_v, f^u, g^u}\cup \delaa{1}{e^u_v, g^u, h^u}\\
  & \cup \delae{1}{f^u, h^u, i^u} \cup \delaa{2}{e^u_v, a^v}
\end{align*}

In terms of vertices of $G$ introduced by the gadget, each vertex $u \in Y$ gives rise to a set $\vforall{u}$ that contains all the vertices needed by the delay sub-gadgets along with $\{a^u, b^u, f^u, g^u, h^u, i^u\} \cup \bigcup_{v \in N(u)} \{c^u_v, d^u_v, e^u_v, j^u_v\}$.

We observe that the only shared vertices between the different existential and universal gadgets are $a^u$ for $u \in X \cup Y$.
For instance, in the universal gadget, each $a^v$ with $v \in N(u)$ is the ``starting vertex'' of the existential gadget encoding the vertex $v \in X$.

\begin{lemma}
\label{lem:universal-exists}
Consider the gadget for a universal vertex $u \in Y$ such that no element of $\vforall{u}\setminus \{a^u\}$ has been claimed yet.
Assume that for any vertex $v \in X$, $a^v$ has not been claimed by Player 2.
Assume that $a^u$ has been played by Player 2 and that it is Player 1's turn.
If Player 2 has no non-losing 3-threats elsewhere on the board, then Player 1 has a strategy $\sigma^{\exists}(u)$ that ensures either 1) that after no more than 8 moves, all of which belong to $\vforall{u}$, Player 1 plays $a^v$ for some $v$ and there are no non-losing 3-threats left for Player 2 or it is Player 1's turn and there are no non-losing 3-threats for Player 2; or 2) that Player 1 wins in no more than 14 moves.
\end{lemma}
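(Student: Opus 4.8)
I would follow the same template as the gadget-analysis lemmas already proved (Lemmas~\ref{lem:existential-exists} and~\ref{lem:existential-forall}): exhibit the local strategy $\sigma^{\exists}(u)$ for Player~1 on the universal gadget $E^{\forall}(u)$, argue that every reply of Player~2 that does not lose quickly is forced, and read the two alternatives of the statement off the length of the resulting forced line. Deviations are handled uniformly and yield alternative~2): if at any point Player~2 plays a losing 3-threat somewhere on the board, Player~1 takes the guaranteed counter-win in at most $6$ moves; and if Player~2 departs from the unique forced local reply, the threat Player~1 is currently carrying matures into a win by Corollary~\ref{cor:delay}, which a running count of the forced line keeps within the $14$-move budget.

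The first step is the opening move. Since Player~2 holds $a^u$ and $\delaa{2}{a^u,b^u}\subseteq E$, Lemma~\ref{lem:delay1} shows that Player~2 threatens to win as soon as he also claims $b^u$; as it is Player~1's turn and, by hypothesis, no vertex of $\vforall{u}\setminus\{a^u\}$ is claimed, $\sigma^{\exists}(u)$ opens by claiming $b^u$. Once Player~1 owns $b^u$ she commands the two sets of 3-threats $\delae{2}{b^u,g^u}$ and $\delae{2}{b^u,i^u}$, which share only the vertex $b^u$, so (just as Player~2 is kept off $e^u$ and $g^u$ in Lemma~\ref{lem:existential-forall}, with the players swapped) Player~2 can never claim $g^u$ or $i^u$ while both of these gadgets are alive, since Player~1 would reply with the other vertex and win by Corollary~\ref{cor:delay}. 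Consequently, on each of his turns Player~2 must maintain a living 3-threat; by hypothesis he has none outside the gadget, and inside it the only 3-threats he can create run through the delay gadgets $\delaa{1}{a^u,c^u_v,d^u_v}$ for $v\in N(u)$, since he holds $a^u$.

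Next I would trace the forced line. Whenever Player~2 advances some $\delaa{1}{a^u,c^u_v,d^u_v}$ by claiming $c^u_v$ or $d^u_v$, Player~1 is forced to parry with the remaining base vertex; and because she already owns $b^u$, that parry is itself a 2-threat through $\delae{1}{b^u,c^u_v,j^u_v}$ or $\delae{1}{b^u,d^u_v,e^u_v}$ respectively. The play then becomes a forced exchange of 2-threats running down one of the two internal sub-chains of $E^{\forall}(u)$ --- the $c^u_v/j^u_v$ side through $\delae{1}{b^u,c^u_v,j^u_v},\ \delaa{1}{d^u_v,j^u_v,f^u},\ \delae{1}{c^u_v,f^u,g^u},\dots$ or the $d^u_v/e^u_v$ side through $\delae{1}{b^u,d^u_v,e^u_v},\ \delaa{1}{c^u_v,e^u_v,f^u},\ \delae{1}{d^u_v,f^u,g^u},\dots$ --- in which each forced block also creates exactly one new 2-threat for the blocker, while Player~2's forced moves deposit the passive 3-threats $\delaa{1}{e^u_v,g^u,h^u}$ and $\delaa{2}{e^u_v,a^v}$. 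One then shows that this exchange lasts at most $8$ moves, all inside $\vforall{u}$, and that Player~1's moves are chosen so that the only living threat left at the end is $\delaa{2}{e^u_v,a^v}$; Player~1 neutralises it by claiming $a^v$, which is alternative~1 (Player~1 has played $a^v$ for some $v\in N(u)$, and the Player-2 hyperedges that remain are the dead $\delaa{1}{\cdot}$ edges and the $\delaa{2}{e^u_w,a^w}$ with $w\neq v$, none a living 3-threat). If the exchange runs out without ever producing a live $\delaa{2}{e^u_v,a^v}$, then it is Player~1's turn and Player~2 has no living 3-threat, which is the other form of alternative~1.

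The main obstacle is the bookkeeping of this forced line, which is more delicate than in Lemma~\ref{lem:existential-forall} because $E^{\forall}(u)$ is the larger gadget. Tracking the two parallel sub-chains and the duplicated edges $\delae{1}{c^u_v,f^u,g^u}$ and $\delae{1}{d^u_v,f^u,g^u}$, one must check that each parry yields Player~1 exactly one fresh 2-threat, that Player~2's forced reply yields him exactly one fresh 2-threat and no 1-threat (so Player~1 never has to abandon her line), and that when the forced blocks eventually push Player~2 onto $g^u$ --- permissible only once $\delae{2}{b^u,g^u}$ has already died --- Player~1 still has a winning continuation, via $\delae{2}{b^u,i^u}$ or $\delae{1}{f^u,h^u,i^u}$, that Player~2's parallel 3-threats $\delaa{1}{e^u_v,g^u,h^u}$ and $\delaa{2}{e^u_v,a^v}$ cannot out-race. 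One also has to rule out that Player~2 could chain $c^u_v/d^u_v$ moves over several neighbours $v$ to stall past move~$8$, or to line up two $\delaa{2}{e^u_v,a^v}$ threats that a single move $a^v$ cannot both defuse --- precisely the kind of issue the ``useless versus living 3-threat'' terminology was introduced to control --- and verify that all the move counts fit the $8$-move (resp.\ $14$-move) budgets dictated by $l=9(k+1)+6$.
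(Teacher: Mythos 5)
Your plan follows the paper's proof essentially move for move: open with $b^u$ to set up the double threat on $\delae{2}{b^u,g^u}$ and $\delae{2}{b^u,i^u}$, observe that Player 2's only non-losing 3-threats are the $\delaa{1}{a^u,c^u_v,d^u_v}$, and split on whether Player 2 parries with $c^u_v$ (leading through $e^u_v, f^u, g^u, h^u, i^u$ to Player 1 claiming $a^v$, your alternative 1) or with $d^u_v$ (leading through $j^u_v, f^u, g^u$ to a Player 1 win via $\delae{2}{b^u,i^u}$, alternative 2) --- exactly the paper's two branches. The move-by-move verification you defer is precisely what the paper's proof consists of, and your threat-counting justification for why Player 2 cannot stall across several neighbours is the same (largely implicit) argument behind the paper's ``forcing'' claims.
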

\begin{proof}
We exhibit a local strategy for Player 1.
Player 1 plays $b^u$ creating sets of 3-threats in $\delae{2}{b^u, g^u}$ and $\delae{2}{b^u, i^u}$.
Claiming either of $g^u$ and $i^u$ is losing for Player 2 because Player 1 can play in the other vertex.
Therefore, Player 2 needs to play in a 3-threat to avoid losing.
The only non-losing 3-threats for Player 2 can be found in the gadget $\delaa{1}{a^u, c^u_v, d^u_v}$ for $v \in N(u)$.

If Player 2 claims $d^u_v$, Player 1 plays $c^u_v$, forcing Player 2 to claim $j^u_v$.
Player 1 plays $f^u$, forcing Player 2 to claim $g^u$.
At this point, Player 1 can play $i^u$ and win by Corollary~\ref{cor:delay} applied to $\delae{2}{b^u,i^u}$.

If instead of $d^u_v$ Player 2 starts by claiming $c^u_v$, then Player 1 plays $d^u_v$, forcing Player 2 to claim $e^u_v$.
Player 1 plays $f^u$, forcing Player 2 to claim $g^u$.
Player 1 plays $h^u$, forcing Player 2 to claim $i^u$.
If $a^v$ has already been claimed by Player 1, then Player 1 is left with no 3-threat to defend.
Otherwise, Player 1 plays $a^v$.
\end{proof}

\begin{lemma}
\label{lem:universal-forall}
Consider the gadget for a universal vertex $u \in Y$ such that no element of $\vforall{u}\setminus \{a^u\}$ has been claimed yet.
Assume that $a^u$ has been played by Player 2 and that it is Player 1's turn.

If Player 1 has no living 3-threats on the whole board, then for each $v \in N(u)$ such that $a^v$ has not been claimed yet, Player 2 has a strategy $\sigma^{\forall}(u,v)$ that ensures either that Player 1 plays $a^v$ after no less than 8 moves all of which either belong to $\vforall{u}$ or are not in any live existential hyperedge and that there are no living 3-threats left for Player 1 in the gadget; or that Player 2 wins.
\end{lemma}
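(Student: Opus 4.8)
The plan is to exhibit an explicit \emph{local} strategy $\sigma^{\forall}(u,v)$ for Player~2 inside the gadget $E^{\forall}(u)$ and to check it meets the two alternatives; the analysis runs parallel to the ``good branch'' of the proof of Lemma~\ref{lem:universal-exists}, the point being that after Player~1's opening move the position coincides with the one treated there, except that it is now Player~2 who drives the funnelling. Since $a^u$ has already been claimed by Player~2, the subgadget $\delaa{2}{a^u,b^u}$ gives Player~2 a standing threat: by Corollary~\ref{cor:delay} applied with $S=\{a^u\}$ (and using that Player~1 has no living $3$-threat anywhere, hence \emph{a fortiori} no $1$-threat), as soon as Player~2 is allowed to claim $b^u$ she wins in a bounded number of moves. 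Hence, unless Player~1's first move is $b^u$ — any other move is either a dead cell or is handled by the relevant gadget lemma and cannot produce a $1$-threat in time — Player~2 claims $b^u$ and wins, which is the second alternative. I may therefore assume Player~1 opens with $b^u$.

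From then on (Player~1 holds $\{b^u\}$, Player~2 holds $\{a^u\}$, Player~2 to move) I let Player~2 play $c^u_v$, creating by Lemma~\ref{lem:delay1} a $2$-threat in $\delaa{1}{a^u,c^u_v,d^u_v}$ (threatening $d^u_v$). Player~1's live threats — through $\delae{2}{b^u,g^u}$, $\delae{2}{b^u,i^u}$, or the per-neighbour subgadgets $\delae{1}{b^u,d^u_w,e^u_w}$, $\delae{1}{b^u,c^u_w,j^u_w}$ — are all at least three moves from a win, so Player~1 loses the race and must neutralise with $d^u_v$; this creates the $2$-threat $\delae{1}{b^u,d^u_v,e^u_v}$ for Player~1, forcing Player~2 to answer $e^u_v$, which creates $\delaa{1}{c^u_v,e^u_v,f^u}$, forcing $f^u$, then $\delae{1}{d^u_v,f^u,g^u}$ forcing $g^u$, then $\delaa{1}{e^u_v,g^u,h^u}$ forcing $h^u$, then $\delae{1}{f^u,h^u,i^u}$ forcing Player~2 to block with $i^u$. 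The forced chain is thus
\[
  b^u,\; c^u_v,\; d^u_v,\; e^u_v,\; f^u,\; g^u,\; h^u,\; i^u,
\]
eight moves all lying in $\vforall{u}$, after which Player~1 holds $\{b^u,d^u_v,f^u,h^u\}$ and Player~2 holds $\{a^u,c^u_v,e^u_v,g^u,i^u\}$. Whenever Player~1 races instead of blocking, Corollary~\ref{cor:delay} hands Player~2 a win in a bounded number of moves (the ``losing $3$-threat'' mechanism), so every such deviation falls under the second alternative; Player~1 merely wasting moves on dead cells only lengthens the play, which is why the bound reads ``no less than~$8$''.

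At the end of the chain Player~2 still owns the standing threat $\delaa{2}{e^u_v,a^v}$ (she holds $e^u_v$, and $a^v$ is unclaimed by hypothesis), while a pass through $E^{\forall}(u)$ shows Player~1 has no live threat left there: the shared vertices $g^u$ and $i^u$, now Player~2's, kill $\delae{2}{b^u,g^u}$, $\delae{2}{b^u,i^u}$, $\delae{1}{f^u,h^u,i^u}$, $\delae{1}{d^u_w,f^u,g^u}$ and $\delae{1}{c^u_w,f^u,g^u}$ for every $w\in N(u)$, and Player~2's tokens $c^u_v$, $e^u_v$ kill the remaining $v$-hyperedges. Hence Player~1 is forced to claim $a^v$ (else Player~2 claims it and wins via Corollary~\ref{cor:delay} on $\delaa{2}{e^u_v,a^v}$): this is the first alternative, and no living $3$-threat for Player~1 is left in the gadget.

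The main obstacle is this last bookkeeping step, and in particular the per-neighbour subgadgets $\delae{1}{b^u,d^u_w,e^u_w}$ and $\delae{1}{b^u,c^u_w,j^u_w}$ with $w\ne v$, in which Player~1 still holds only $b^u$: one must argue that developing either of them is useless (or losing) for Player~1, exactly as in the proof of Lemma~\ref{lem:universal-exists}, where Player~2 answers a poke at such a subgadget by the paired move, voiding the threat without conceding anything, so that those $3$-threats are not living. Once this, the race estimates at each step, and the fact that the whole play fits the move budget $l=9(k+1)+6$ are in place, the two alternatives of the statement follow.
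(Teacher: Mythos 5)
Your proposal is correct and follows essentially the same line as the paper's proof: Player 1's opening $b^u$ is forced by the standing threat $\delaa{2}{a^u,b^u}$, then the same eight-move forced chain $b^u, c^u_v, d^u_v, e^u_v, f^u, g^u, h^u, i^u$ through the same subgadgets, ending with $\delaa{2}{e^u_v,a^v}$ forcing $a^v$ and the same bookkeeping that $g^u$ and $i^u$ kill Player 1's remaining local hyperedges except the $w\neq v$ subgadgets, which are voided by paired replies. The only differences are presentational (you describe Player 2's moves as forced blocks that happen to create counter-threats, while the paper presents them as proactive forcing moves — these are the same moves) plus a harmless notational slip where you write $S=\{a^u\}$ instead of $S=\{a^u,b^u\}$ for the opening delay gadget.
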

\begin{proof}
We exhibit the strategy for Player 2 and show that Player 1's answers are forced to prevent Player 2 from winning.
By assumption Player 1 has no living 3-threats anywhere else on the board and no vertices claimed in $\vforall{u}$, so unless Player 1 plays $b^u$, Player 2 wins in 6 moves by claiming $b^u$ himself via Corollary~\ref{cor:delay} applied to $\delaa{2}{a^u, b^u}$.
Although Player 1 has now sets of 3-threats which involve $e^u$ and $g^u$, she does not have any 2-threats.
Player 2 plays $c^u_v$ which forces Player 1 to claim $d^u_v$ by Corollary~\ref{cor:delay} applied to $\delaa{1}{a^u, c^u_v, d^u_v}$.
Player 2 plays $e^u$ which forces Player 1 to claim $f^u$.
Player 2 plays $g^u$ which forces Player 1 to claim $h^u$.
Player 2 plays $i^u$.
At this point, 8 moves have been played, Player 1 has no 3-threats left in the gadget, so Player 1 is forced to play $a^v$ lest Player 2 plays $a^v$ and wins in a total of 14 moves by Corollary~\ref{cor:delay} applied to $\delaa{2}{e^u_v, a^v}$.

Since Player 2 has claimed $g^u$ and $i^u$, the only local hyperedges remaining for Player 1 are in $\delae{2}{b^u, d^u_w, e^u_w}$ and $\delae{2}{b^u, c^u_w, j^u_w}$ for $w\neq v$, and none of them is feature a living 3-threat.
\end{proof}

\subsection{Correctness of the reduction}\label{sec:correctness}

To show that YES \SGG instances are mapped to YES \SMM instances and that NO instances are mapped onto NO instances, we prove that any Player 1 winning strategy in \SGG gives rise to a winning strategy for Player 1 in the corresponding \SMM instance, and conversely for Player 2 winning/delaying strategies.

Assume that Player 1 can ensure a win within $k$ moves in \SGG with strategy $\tau$, and let us construct a strategy $\sigma$ ensuring a Player 1 win within $l$ moves in \SMM.
After Player 1 starts with move $a^{v_0}$, we use $\tau$, Lemma~\ref{lem:existential-exists}, and Lemma~\ref{lem:universal-exists} to create $\sigma$ such that whenever $\tau$ prescribes that the token moves from a vertex $u \in X$ to $v \in Y$, we use $\sigma^{\exists}(u,v)$ to leave the $u$-gadget and enter the $v$-gadget.
When the \SMM game enters a $u$-gadget with $u \in Y$, we use $\sigma^{\exists}(u)$ to select moves in the gadget until the $u$-gadget is left and enters a $v$-gadget with $v \in X$.
If $a^v$ is already claimed by Player 1, then Player 2 has no non-losing threats and Player 1 can enter the $\delae{4}{\exists}$ gadget and win by Corollary~\ref{cor:delay}.
Otherwise, we then update the \SGG game with Player 2 moving the token to $v$.
Eventually, the \SGG game reaches a vertex $u \in Y$ such that all neighbors have been visited before and the game ends.
In the \SMM instance, Player 1 follows $\sigma^{\exists}(u)$ and then wins by entering the $\delae{4}{\exists}$ gadget.
If $\tau$ guarantees that at most $k' \leq k$ moves are played before Player 1 wins, then $\sigma$ guarantees that at most $9(k'+1)+6 \leq l$ moves are played before Player 1 wins.

In the case of a NO \SGG instance, Player 2 has a strategy $\tau$ such that either Player 2 wins, or the game goes for longer than $k$ moves.
A corresponding \SMM strategy $\sigma$ can be derived such that either Player 2 wins in the \SMM game, or the game goes for longer than $9(k+1)+6 = l$ moves.
The construction is dual to the one above and relies on Lemma~\ref{lem:existential-forall} and Lemma~\ref{lem:universal-forall}.

\end{document}